\newcolumntype{P}[1]{>{\centering\arraybackslash}p{#1}}
\newcolumntype{M}[1]{>{\centering\arraybackslash}m{#1}}
\newcommand{\bbZ}{\mathbb{Z}}
\definecolor{mygray}{gray}{0.4}
\definecolor{light-blue}{rgb}{0.8,0.85,1}
\definecolor{jacolor}{RGB}{200,40,0} \FXRegisterAuthor{ja}{aja}{\color{jacolor}JA}
\mathchardef\Re="023C
\mathchardef\Im="023D
\newcommand{\jami}{Jamiołkowski}
\newcommand{\mcH}{\mathcal{H}}
\newcommand{\mcE}{\ensuremath{\mathcal{E} }}
\newcommand{\mcG}{\ensuremath{\mathcal{G} }}
\newcommand{\mcD}{\mathcal{D}}
\newtheorem{theorem}{Theorem}
\newcommand{\lpart}{\overline{L}}
\newcommand{\mpart}{\overline{M}}
\newcommand{\unit}[2]{U(#1,#2)}
\newcommand{\coef}[2]{\tau(#1,#2)}
\newcommand{\eref}[1]{eq.~(\ref{#1})} 
\newcommand{\sref}[1]{sec.~\ref{#1}}
\newcommand{\Fref}[1]{Fig.~\ref{#1}}
\newcommand{\tr}{\mathop{\mathrm{Tr}}u\nolimits}
\newcommand{\ket}[1]{{\vert #1 \rangle}}
\newcommand{\bra}[1]{{\langle #1 \vert}}
\renewcommand{\tr}{\mbox{\rm Tr}}
\newcommand{\vm}{\vec m}
\newcommand{\vn}{\vec n}
\newcommand{\Uv}{U(\vm,\vn)}
\newcommand{\tauv}{\tau (\vm,\vn)}
\renewcommand\star{\ast} 
\newcommand{\unam}{Universidad Nacional Aut\'onoma de M\'exico, Ciudad de M\'exico 01000, Mexico}
\newcommand{\icf}{Instituto de Ciencias F\'{\i}sicas, Universidad Nacional Aut\'onoma de M\'exico, Cuernavaca 62210, Mexico}
\newcommand{\ifunam}{Instituto de F\'{\i}sica, \unam}
\newcommand{\ciencias}{Facultad de Ciencias, Universidad Nacional Autónoma de México, Ciudad de México 01000, Mexico}
\newcommand{\affalejandro}{Departamento de Física, CCEN, Universidade Federal de Pernambuco, Recife 50670-901, PE, Brazil}
\begin{document}
\title{Weyl channels for multipartite systems} 
\author{Tomás Basile} \affiliation{\ciencias}
\author{Jose Alfredo de Leon} \affiliation{\ifunam}
\author{Alejandro Fonseca} \affiliation{\affalejandro}
\author{François Leyvraz} \affiliation{\icf}
\author{Carlos Pineda} \email{carlospgmat03@gmail.com} \affiliation{\ifunam}
\begin{abstract}

Quantum channels, a subset of quantum maps, describe the unitary and non-unitary
evolution  of quantum systems.
We study a generalization of the concept of Pauli maps to the case of
multipartite high dimensional quantum systems through the use of the Weyl
operators. 
The condition for such maps to be valid quantum channels, i.e. complete
positivity, is derived in terms of Fourier transform matrices. 
From these conditions, we find the extreme points of this set of channels and 
identify an elegant algebraic structure nested \textit{within} them. 
In turn, this allows us to expand upon the concept of 
'component erasing channels' introduced in
earlier work by the authors.
We show that these channels are completely 
characterized by elements drawn of finite cyclic groups.
An algorithmic construction for such channels is presented and the smallest subsets of
erasing channels which generate the whole set are determined.
\end{abstract}
\keywords{Quantum channels, decoherence, quantum many-body systems}
\pacs{03.65.Yz, 03.65.Ta, 05.45.Mt}
 
\maketitle

\newcommand{\sa}{\ensuremath{\textit{a}}}
\newcommand{\san}{\ensuremath{\textit{A}}}
\section{Introduction} 
The description of open quantum systems \cite{breuer2007theory,rivas2012open}
serves a twofold purpose. Firstly, it lies at the core of the measurement
problem~\cite{Schlosshauer2004,RevModPhys.75.715}, thus bearing a fundamental
interest. On the other hand, it describes quantum systems where the inevitable
interaction with an environment is taken into account
\cite{nielsen_chuang_2010}. For most implementations of quantum devices it is
crucial to understand and control such unwanted interaction.  In both cases, a
natural language for such a description is that of quantum channels, which have
been subject of intense research \cite{wilde2017}. 

The properties of a quantum channel dictate the characteristics of the
associated quantum dynamics. In the realm of qubits, the set of quantum
channels has been explored and thanks to a better understanding of its
geometry, several physical properties of the set, such as divisibility
\cite{Wolf2008assesing,Davalos2019,heinosaari2011mathematical},
non-Markovianity \cite{Rivas2014,Breuer2016}, channel capacity
\cite{Gyongyosi2018}, among others have been unraveled.  In a previous paper
\cite{DeLeon2022} we proposed and studied a class of channels acting on
multi-qubit systems that either erased or preserved the Pauli components of the
state. These are the so  called Pauli component erasing (PCE) maps, which are
an important subset of the Pauli maps.
We found that every PCE channel corresponds uniquely to a vector subspace of a
discrete vector space.  Such channels can be associated with measurements and
asymptotic Lindbladian evolution. 

Moreover, most of the applications in the field of quantum information have
been built upon qubits. Nevertheless, many real-world realizations of quantum
systems have more than two levels that can be used to provide an important
technical advantage. Such advantage is indeed employed to develop several
important tasks like quantum cryptography \cite{Brus2002,Cerf2002}, quantum
computation \cite{Ralph2007,Campbell2014,Wang2020}, violation of Bell
inequalities \cite{Vertesi2010}, randomness generation \cite{Skrzypczyk2018},
among others.  For this reason, the study of high-dimensional and multiparticle
systems is of relevance.

In this article, we introduce the concept of Weyl channels for systems composed
of many particles, allowing  each of these to be of different dimensions.  We
begin defining these channels in \sref{sec:weyl} as diagonal channels in the
basis of multi-particle Weyl matrices, which are tensor products of the
well-known Weyl matrices. Moving forward, we proceed to diagonalize the
Choi-\jami{} matrix, revealing a linear relationship between the eigenvalues
and those of the channel. From this, we find two significant properties of the
set of Weyl channels: (1) its extreme points in \sref{sec:extreme}, and (2) a
subgroup structure of all Weyl channels in \sref{sec:border}. Then, in
\sref{sec:WCE} we extend the notion of \textit{component erasing} channels by
introducing the Weyl erasing channels. Given its semigroup property, we
describe the generator subset by means of the aforementioned algebraic
structure of Weyl channels. Finally, we wrap up and conclude in
\sref{sec:conclusions}.
\section{Weyl channels}
\label{sec:weyl}
A well-known generalization of the Pauli matrices to arbitrary $d$-dimensional Hilbert spaces
was introduced by Weyl~\cite{weyl1927quantenmechanik} and involves the following unitary matrices~\cite{Bertlmann2008}:
\begin{equation}\label{eq:general1}
\unit{m}{n}=\sum_{k=0}^{d-1}\omega^{mk}\ket{k}\bra{k+n}.
\end{equation}
Here we introduce the notation we shall use throughout: $\omega$ is the primitive $d$-th root
of unity $\exp(2\pi i/d)$. 
All arithmetical operations over latin indices are taken over modulo $d$.

We will further be mainly concerned with systems of $N$ qudits, for which we introduce the following
standard notations: 
\begin{equation}
\unit{\vec m}{\vec n}=\mathop{\bigotimes}_{\alpha=1}^N\unit{m_\alpha}{n_\alpha}
\label{eq:general2}
\end{equation}
Greek indices will always run over a range  
from 1 to $N$, and the arithmetic operations over them will always be the usual ones. When the range is
not specified, it will be from 1 to $N$. 

We now write for example
\begin{equation}
\unit{\vec m}{\vec n}=\sum_{\vec k}\omega^{\vec m\cdot \vec k}\ket{\vec k}\bra{\vec k+\vec n},
\label{eq:general3}
\end{equation}
the notational conventions being self-explanatory. Note further that all our
results can routinely be extended to the more complicated case in which the
different particles in the $N$-particle system have different dimensions, $d_\alpha$. The ``vectors'' $\vec m$ are then replaced by lists of integers,
with $0\leq m_\alpha\leq d_\alpha-1$.
Whereas this complicates the notation
considerably, no points of essential interest are thereby introduced. We thus
leave it to the interested reader to develop these issues. When non-trivial
points arise in this respect, we shall explicitly point this out. 

These unitary matrices satisfy certain elementary properties:
\begin{align}
\tr \unit{m}{n}^\dagger\unit{m^\prime}{n^\prime}&=d\delta_{mm^\prime}\delta_{nn^\prime}
\label{eq:general4} \\
\unit{m}{n}\unit{m^\prime}{n^\prime}&=\omega^{m^\prime n}\unit{m+m^\prime}{n+n^\prime}
\nonumber \\
\unit{m}{n}\unit{m^\prime}{n^\prime}&=\omega^{m^\prime n-m n^\prime}\unit{m^\prime}{n^\prime}
\unit{m}{n}
\nonumber \\
\unit{m}{n}^\dagger&=\omega^{mn}\unit{-m}{-n} \nonumber
\end{align}
as well, of course, as their vectorial equivalents. 

We now define Weyl maps and the corresponding channels: any density matrix on the space of
$N$ qudits, that is, on $(\mathbb{C}^d)^{\otimes N }$, can be expressed as
\begin{equation}
\rho=\frac{1}{d^N}\sum_{\vec m,\vec n}\alpha(\vec m,\vec n)\unit{\vec m}{\vec n}
\label{eq:general5}
\end{equation}
where $\alpha(\vec m,\vec n)$ satisfies 
$\alpha(\vec m,\vec n)=\omega^{\vec m\cdot \vec n}\alpha^*(-\vec m,-\vec n)$
in order for $\rho$ to satisfy the condition of hermiticity. 
More intricate conditions need to be satisfied
in order to yield a positive matrix, 
but we shall not be concerned with these.

A Weyl map is now defined as follows 
\begin{equation}
\rho \rightarrow
\rho' = 
\mcE[\rho]=\frac{1}{d^N}\sum_{\vec m,\vec n}\coef{\vec m}{\vec n}\alpha(\vec m,\vec n)\unit{\vec m}{\vec n}.
\label{eq:general6}
\end{equation}
Here the $\coef{\vec m}{\vec n}$ are complex numbers, whereas $\rho$ is the density matrix given in 
(\ref{eq:general5}). 
In other words, if the $U(\vec m, \vec n)$ are viewed as generators of the
vector space of all Hermitian matrices, the Weyl maps act {\em diagonally\/} on
this set. 

We now wish to find the conditions necessary and sufficient for $\mcE$ to be a quantum channel, that
is, to be trace and hermiticity preserving, as well as completely positive. For the former two
conditions, we require
\begin{subequations}
\begin{eqnarray}
\coef{\vec m}{\vec n}&=&\coef{-\vec m}{-\vec n}^\star,
\label{eq:general7a}\\
\coef{0}{0}&=&1.
\label{eq:general7b}
\end{eqnarray}
\label{eq:general7}
\end{subequations}
To verify complete positivity, we must check the circumstances under which the
Choi--\jami{} matrix, given by
\begin{equation}
\mcD=\frac1{d^N}\sum_{\vec m,\vec n}\coef{\vec m}{\vec n}\unit{\vec m}{\vec
n}\otimes\unit{\vec m}{\vec n}^\star
\label{eq:general8}
\end{equation}
is positive semidefinite. Interestingly, \eref{eq:general8} is the
corresponding Choi--\jami{} matrix, even if $U(\vec m,\vec n)$ are not Weyl
operators, but an arbitrary basis of Hilbert-Schmidt space, as shown in
Appendix \ref{app:choi}.
%
%
To specify the criteria for matrix in \eref{eq:general8} to be
positive semidefinite, we evaluate its
eigenvalues $\lambda(\vec m,\vec n)$. This is easily done after noticing that the various
elements of the sum, namely the $\unit{\vec m}{\vec n}\otimes\unit{\vec m}{\vec n}^\star$
all commute for arbitrary values of $\vec m$ and $\vec n$, as readily follows from 
(\ref{eq:general4}):
\begin{widetext}
\begin{eqnarray}
\Big(\unit{m}{n}\otimes\unit{m}{n}^\star\Big)
\Big(\unit{m^\prime}{n^\prime}\otimes\unit{m^\prime}{n^\prime}^\star\Big)&=&
\Big(\unit{m}{n}\unit{m^\prime}{n^\prime}\Big)\otimes\Big(\unit{-m}{n}
\unit{-m^\prime}{n^\prime}\Big)
\nonumber\\
&=&\Big(\unit{m+m^\prime}{n+n^\prime}\Big)
\otimes\Big(\unit{-(m+m^\prime)}{n+n^\prime}\Big)
\label{eq:general8a}
\end{eqnarray}
\end{widetext}
The symmetry of the final expression proves the claim, and the extension to the case of arbitrary 
$N$ is straightforward. 

%
It now remains to determine the eigenvalues of $\unit{\vec m}{\vec n}$, which given its tensor product structure [see \eref{eq:general2}]
can be reduced to the single-qudit case of $\unit{m}{n}$. 
These can be calculated directly
studying the recursion relation that follows from the eigenvalue
equation 
for the Weyl operators, see Appendix~\ref{app:eigenvalues}.
%
One can then readily see that the eigenvalues $\mu(r,s)$ of
$U(m, n)\otimes U(m, n)^\star$ take the form
\begin{equation}
\mu(r,s)=\omega^{mr-ns},
\label{eq:general10v2}
\end{equation}
where $r$ and $s$ are arbitrary integers modulo $d$ that serve as labels for
the eigenvalue.  The degeneracy pattern
of these eigenvalues is complicated,
but since our
focus is on the positivity of $\mcD$, we do not need to consider these details. 

The set of eigenvalues of $\unit{\vec m}{\vec n}\otimes\unit{\vec m}{\vec n}^*$ is then given by
\begin{equation}
\mu(\vec r,\vec s)=\omega^{\vec m\cdot\vec r-\vec n\cdot\vec s}.
\label{eq:general11v2}
\end{equation}
The condition for the positive semidefiniteness of $\mcD$ is thus that, for all $\vec r$ and $\vec s$,
\begin{equation}
d^{-N}\sum_{\vec m,\vec n}\coef{\vec m}{\vec n}\omega^{\vec m\cdot\vec r-\vec n\cdot\vec s}=
\lambda(\vec r,\vec s)\geq0.
\label{eq:general12}
\end{equation}
Note that condition (\ref{eq:general7a}) on $\coef{\vec m}{\vec n}$ straightforwardly shows that the
left-hand side of (\ref{eq:general12}) is real, so that the inequality is meaningful. 

\begin{figure*}
\includegraphics[width=\textwidth]{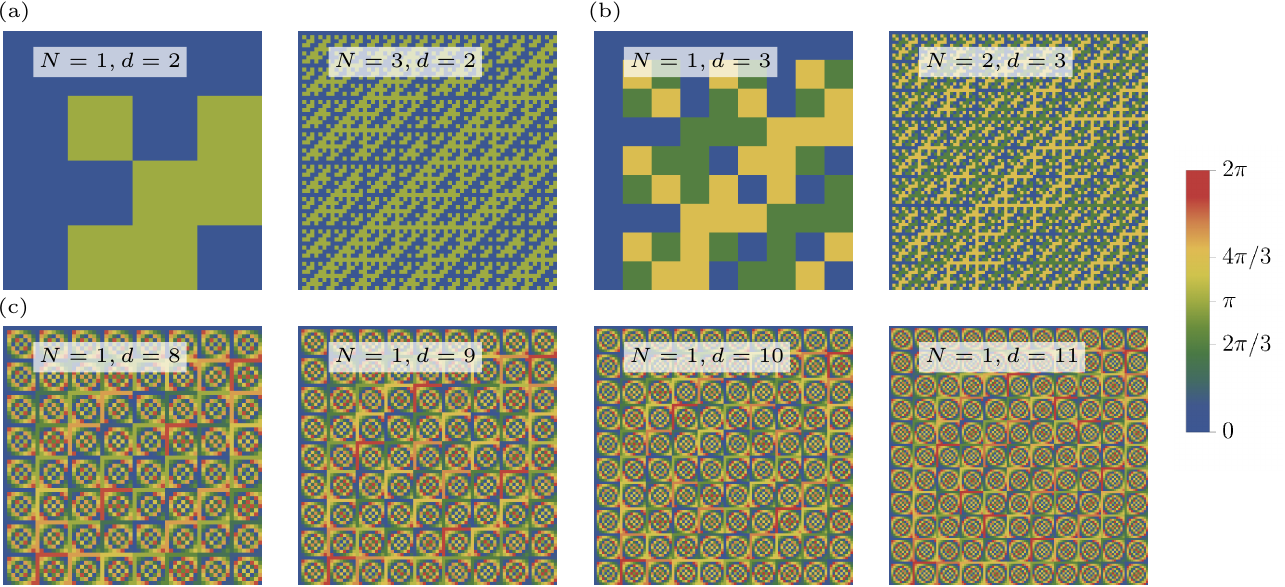}
\caption{
Visualization of the argument of the matrix elements
$\bigotimes_\alpha \qty[F_\alpha\otimes F^*_\alpha](\vec m,\vec n;\vec r,\vec s)$
for different dimensions and number of particles;
rows and columns are indexed by the double indices $(\vec m,\vec n)$ and 
$(\vec r,\vec s)$, respectively.
This matrix map the $\tau(\vec m,\vec n)$ of a Weyl map to the eigenvalues 
$\lambda(\vec r,\vec s)$ of its Choi-\jami{} matrix, see eqs~(\ref{eq:general12}) and 
(\ref{eq:general12.1}). 
We show plots for systems of (a) 
qubits, (b) qutrits, and (c) single-qudits.
Notice that not only the total dimension is relevant, but also the number of 
particles; for instance, compare $N=3$, $d=2$ with $N=1$, $d=8$. 
}
\label{fig:generalized:a:v2}
\end{figure*}

The $\lambda(\vec r,\vec s)$ are the eigenvalues of $\mcD$. They can also be used to characterize 
the Weyl channel $\mcE$. Inverting the relation (\ref{eq:general12}) we get
\begin{subequations}
\begin{eqnarray}
\coef{\vec m}{\vec n}&=&d^{-N}\sum_{\vec r,\vec s}\lambda(\vec r,\vec s)\omega^{-\vec m\cdot\vec r+
\vec n\cdot\vec s},
\label{eq:general13a}\\
\sum_{\vec r,\vec s}\lambda(\vec r,\vec s)&=&d^N.
\label{eq:general13b}
\end{eqnarray}
\label{eq:general13}
\end{subequations}
Here (\ref{eq:general13b}) follows from $\tr\mcD=d^N$, which is a consequence of 
(\ref{eq:general7b}) and (\ref{eq:general8}). 

From (\ref{eq:general8}) and (\ref{eq:general10v2}) follows that the $\coef{\vec m}{\vec n}$ and the $\lambda(\vec r,\vec s)$ are connected by the 
following {\em linear\/} relationship: 
\begin{equation}
\coef{\vec m}{\vec n} =
\sum_{\vec r,\vec s}
\bigotimes_{\alpha}
\qty[ F_\alpha\otimes F_\alpha^\star ]
(\vec m,\vec n;\vec r,\vec s)\lambda(\vec r, \vec s),
\label{eq:general12.1}
\end{equation}
where $F_\alpha$ is the quantum Fourier transform matrix for dimension $d_\alpha$ in the general case, 
and of dimension $d$ in the case we shall generally study (see \Fref{fig:generalized:a:v2}).

We have therefore obtained a full characterization of Weyl channels: choosing arbitrary 
$\lambda(\vec r,\vec s)$ that are positive and add up to $d^N$, the $\coef{\vec m}{\vec n}$ 
given by (\ref{eq:general13a}) define a Weyl channel. 


It is important to highlight that the set of channels introduced in the present work are different from other kinds of generalization of Pauli channels introduced previously \cite{ruskai,Ohno2009,Chruscinski2016}. On the other hand, similar expressions for the eigenvalues associated to random unitary channels on single $d$-level systems have been presented in \cite{Chruscinski2013,Chruscinski2015}. 

\section{Set of extreme points}
\label{sec:extreme}
%
%
%
%
%
%
%
The set of Weyl channels is clearly convex, since equations \eqref{eq:general13} imply that
any Weyl channel is given by a 
convex sum of channels of the form
\begin{equation}
\tau_{\vec{r}_0,\vec{s}_0}(\vec{m},\vec{n}) = \omega^{-\vec{m} \cdot \vec{r}_0 + \vec{n} \cdot \vec{s}_0},
\label{eq:extreme4}
\end{equation}
where $\vec{r}_0, \vec{s}_0$ are fixed vectors whose elements are integer numbers modulo $d$.

Furthermore, we can see that the set of 
Weyl channels is in fact a $d^{2N}-1$ dimensional simplex. 
Recall that all eigenvalues $\lambda(\vec{r},\vec{s})$ of the Choi--\jami{} matrix of a Weyl channel must be non-negative, and sum up to $d^N$ [see
\eref{eq:general13b}].
The set $\lambda(\vec{r},\vec{s})$ is thus the standard $d^{2N}-1$ dimensional simplex. Since the connection \eqref{eq:general12}
between the $\lambda$'s and the $\tau$'s is linear and invertible, then the set of all $\tau$'s is also a 
$d^{2N}-1$ dimensional simplex.  Note however that the $\tau$'s are complex,
so they are actually part of a bigger $2d^{2N}$ dimensional real vector space.
Nonetheless, conditions \eqref{eq:general7} (which are automatically 
satisfied by the formulae  \eqref{eq:general12}) 
additionally limit the $\tau$'s so that the 
number of degrees of freedom is back to $d^{2N}-1$.

Moreover, the extreme points of
the simplex of Weyl channels are given by the $d^{2N}$
channels of equation \eqref{eq:extreme4}.
This is because the extreme points of the $\lambda$'s simplex are
clearly 
\begin{equation}
\lambda(\vec{r},\vec{s}) = d^N \delta_{\vec{r},\vec{r}_0} \delta_{\vec{s},\vec{s}_0}.
\label{eq:extreme5}
\end{equation}
Therefore, those of the set of Weyl channels
are given by applying the transformation (\ref{eq:general13a})
to these extreme points, obtaining as a result the channels of equation \eqref{eq:extreme4}. 
In fact, these channels are the only Weyl channels with the property 
that for all $\vec{m},\vec{n}$, $\abs{\tau(\vec{m},\vec{n})}=1$, as shown in
the following theorem.
\begin{theorem}
\label{thm1}
A Weyl channel is an extreme point of the set of Weyl channels if and only if
$|\tau(\vec{m},\vec{n})|=1$ for all $\vec{m},\vec{n}$.
\end{theorem}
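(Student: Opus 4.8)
The plan is to prove the two directions separately, using the inversion formula (\ref{eq:general13a}) together with the simplex structure already established. The forward direction is immediate: an extreme point corresponds to a vertex of the $\lambda$-simplex, i.e.\ $\lambda(\vec r,\vec s)=d^N\delta_{\vec r,\vec r_0}\delta_{\vec s,\vec s_0}$ as in (\ref{eq:extreme5}), which (\ref{eq:general13a}) sends to $\coef{\vec m}{\vec n}=\omega^{-\vec m\cdot\vec r_0+\vec n\cdot\vec s_0}$; this plainly satisfies $\abs{\coef{\vec m}{\vec n}}=1$ for all $\vec m,\vec n$. All the content therefore lies in the converse.

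For the converse I would exploit the fact that (\ref{eq:general13a}) displays each $\coef{\vec m}{\vec n}$ as a convex combination of unimodular numbers. Setting $p(\vec r,\vec s)=\lambda(\vec r,\vec s)/d^N$, complete positivity ($\lambda\geq0$) together with (\ref{eq:general13b}) makes $p$ a probability distribution, and
\begin{equation}
\coef{\vec m}{\vec n}=\sum_{\vec r,\vec s}p(\vec r,\vec s)\,\omega^{-\vec m\cdot\vec r+\vec n\cdot\vec s}
\end{equation}
is a weighted average of points on the unit circle. Invoking strict convexity of the unit disk --- a convex combination of points on the circle has modulus $1$ only if all points carrying positive weight coincide --- the hypothesis $\abs{\coef{\vec m}{\vec n}}=1$ forces, for each fixed $(\vec m,\vec n)$, the phase $\omega^{-\vec m\cdot\vec r+\vec n\cdot\vec s}$ to be constant across the support $S=\{(\vec r,\vec s):\lambda(\vec r,\vec s)>0\}$.

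The final step is to promote this pointwise-in-$(\vec m,\vec n)$ statement to the conclusion that $S$ is a singleton. For any two support points $(\vec r_1,\vec s_1),(\vec r_2,\vec s_2)\in S$ the previous step gives $\omega^{-\vec m\cdot(\vec r_1-\vec r_2)+\vec n\cdot(\vec s_1-\vec s_2)}=1$ for \emph{all} $\vec m,\vec n$; specializing $\vec n=0$ and letting $\vec m$ range over the standard basis vectors forces $\vec r_1\equiv\vec r_2$ componentwise modulo $d$, and symmetrically $\vec s_1\equiv\vec s_2$. Hence $S=\{(\vec r_0,\vec s_0)\}$, so by normalization $\lambda(\vec r,\vec s)=d^N\delta_{\vec r,\vec r_0}\delta_{\vec s,\vec s_0}$, which is exactly the vertex (\ref{eq:extreme5}); the channel is therefore extreme.

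I do not anticipate a real obstacle. The one step needing careful statement is the equality case of the triangle inequality used above --- equivalently the strict convexity of the Euclidean norm on the circle --- since it is precisely what converts the modulus hypothesis into coincidence of phases. The concluding discreteness argument is routine, relying only on $\omega$ being a primitive $d$-th root of unity so that the exponents exhaust the residues modulo $d$.
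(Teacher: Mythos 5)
Your proposal is correct and follows essentially the same route as the paper's proof: both directions rest on the inversion formula \eqref{eq:general13a}, the interpretation of $d^{-N}\lambda(\vec r,\vec s)$ as a probability distribution, and the equality case of the triangle inequality on the unit circle. Your concluding step---specializing $\vec n=\vec 0$ and letting $\vec m$ run over basis vectors to force the support of $\lambda$ to be a singleton---is in fact slightly more careful than the paper's wording, which asserts that any sum with more than one nonzero term already has modulus strictly less than one; that assertion is only justified once the hypothesis $\abs{\tau(\vec m,\vec n)}=1$ is invoked for \emph{all} $(\vec m,\vec n)$, exactly as you do.
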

\begin{proof}
We have already proved that extreme points 
are of the form \eqref{eq:extreme4}, and therefore satisfy that
$|\tau(\vec{m},\vec{n})|=1$, so we only need to prove the converse. 
Equation \eqref{eq:general13a} says that
\begin{equation}
\tau(\vec{m},\vec{n}) = d^{-N} \sum_{\vec{r},\vec{s}} \lambda(\vec{r},\vec{s}) \omega^{-\vec m\cdot\vec r+\vec n\cdot \vec s}.
\label{eq:extreme6}
\end{equation}
Recall that $d^{-N}\lambda(\vec{r},\vec{s})$ are non-negative and add up to $1$. It follows from the triangle inequality that if the sum in the right-hand side of \eqref{eq:extreme6} has more than one term, then $\abs{\tau(\vec{m},\vec{n})}<1$.
Thus, a Weyl channel with $\abs{\tau(\vec m,\vec n)}=1$ must have all $\lambda(\vec r,\vec s)$ equal to $0$ except one, say $\lambda(\vec r_0,\vec s_0)$.
In other words, if the Choi--\jami{} matrix of a Weyl channel has only one
eigenvalue $\lambda(\vec r_0,\vec s_0)$ different from zero, then
that Weyl channel is an extreme point.
\end{proof}

The simplest case that illustrates the result of this theorem are the single-qubit 
Weyl quantum channels. Given that the Weyl operators for $d=2$ reduce to 
the Pauli operators, the extremal points for these are the vertices of the 
well-known tetrahedron of qubit quantum channels~\cite{bengtsson_zyczkowski_2017}, 
which we illustrate in Fig.~\ref{fig:simpleces}(a).

We can now characterize in greater detail these extreme points. 
For a given value of $\vec{r}_0, \vec{s}_0$, 
the effect of the channel given by \eqref{eq:extreme4} on a Weyl matrix $U(\vec{m},\vec{n})$ is
\begin{eqnarray}
\mcE_{\vec r_0,\vec s_0}\left[
\unit{\vec m}{\vec n}
\right]&=&\omega^{-\vec r_0\cdot\vec m+\vec s_0\cdot\vec n}
\unit{\vec m}{\vec n}
\nonumber\\
&=&\unit{\vec s_0}{\vec r_0}\unit{\vec m}{\vec n}\unit{\vec s_0}{\vec r_0}^\dagger.
\label{eq:extreme3.1}
\end{eqnarray}
We see therefore that the extreme points of the set of all Weyl channels are unitary channels. Since all Weyl channels are convex combinations of the extreme points \cite{dahl2010introduction}, 
it immediately follows that all the Weyl
channels are simply random unitary channels, constructed from the Weyl unitaries. 

\section{A mathematical structure within Weyl  channels} \label{sec:border} 
In this section, we focus on a subset of Weyl channels with physical relevance
and mathematical beauty. We consider the Weyl channels, which, when
iterated infinitely, converge to channels that completely erase, preserve, or
introduce phases to the projections of the density matrix onto the Weyl
operator basis. Our main results include the characterization of the group
property of this particular subset and a method to determine these channels.
Specifically, we show that the corresponding channels can be obtained by
identifying all subgroups of $\mathbb{Z}_d\oplus \mathbb{Z}_d$ and their
homomorphisms to $\mathbb Z_d$.

\subsection{Subgroup property of Weyl channels} 
\begin{theorem}
\label{thm2}
Let $\coef{\vec m}{\vec n}$ and $\coef{\vec m^\prime}{\vec n^\prime}$ have both norm 1. Then so does 
$\coef{\vec m+\vec m^\prime}{\vec n+\vec n^\prime}$ and additionally
\begin{equation}
\coef{\vec m+\vec m^\prime}{\vec n+\vec n^\prime}=\coef{\vec m}{\vec n}\coef{\vec m^\prime}
{\vec n^\prime}
\label{eq:border1}
\end{equation}
\end{theorem}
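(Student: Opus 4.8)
The plan is to exploit the linear relation \eqref{eq:general13a}, which expresses $\coef{\vec m}{\vec n}$ as a convex combination of unit-modulus phases. Writing $p(\vec r,\vec s):=d^{-N}\lambda(\vec r,\vec s)$, these weights are non-negative and sum to $1$ (as recorded just before Theorem~\ref{thm1}), so that $\coef{\vec m}{\vec n}=\sum_{\vec r,\vec s} p(\vec r,\vec s)\,\omega^{-\vec m\cdot\vec r+\vec n\cdot\vec s}$ is a genuine convex combination of points on the unit circle. First I would introduce the support $S=\{(\vec r,\vec s):\lambda(\vec r,\vec s)>0\}$ and record the equality case of the triangle inequality: a convex combination of points on the unit circle has modulus $1$ if and only if all points carrying positive weight coincide, in which case the combination equals that common point.

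The key step is then a rigidity observation applied to each of the two hypotheses. Since $|\coef{\vec m}{\vec n}|=1$, all phases $\omega^{-\vec m\cdot\vec r+\vec n\cdot\vec s}$ with $(\vec r,\vec s)\in S$ must be equal, and their common value is precisely $\coef{\vec m}{\vec n}$; likewise $|\coef{\vec m^\prime}{\vec n^\prime}|=1$ forces $\omega^{-\vec m^\prime\cdot\vec r+\vec n^\prime\cdot\vec s}=\coef{\vec m^\prime}{\vec n^\prime}$ for every $(\vec r,\vec s)\in S$. This is the same mechanism already used in the proof of Theorem~\ref{thm1}, but now applied to a fixed pair $(\vec m,\vec n)$ rather than demanding a single non-zero eigenvalue.

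Finally I would evaluate $\coef{\vec m+\vec m^\prime}{\vec n+\vec n^\prime}$ directly from \eqref{eq:general13a}. On the support $S$ the exponent splits additively, so that $\omega^{-(\vec m+\vec m^\prime)\cdot\vec r+(\vec n+\vec n^\prime)\cdot\vec s}=\omega^{-\vec m\cdot\vec r+\vec n\cdot\vec s}\,\omega^{-\vec m^\prime\cdot\vec r+\vec n^\prime\cdot\vec s}=\coef{\vec m}{\vec n}\coef{\vec m^\prime}{\vec n^\prime}$, a constant independent of $(\vec r,\vec s)$. Pulling this constant out of the sum and using $\sum_{(\vec r,\vec s)\in S} p(\vec r,\vec s)=1$ gives $\coef{\vec m+\vec m^\prime}{\vec n+\vec n^\prime}=\coef{\vec m}{\vec n}\coef{\vec m^\prime}{\vec n^\prime}$, which simultaneously proves the product formula \eqref{eq:border1} and shows the left-hand side has modulus $1$ (being a product of two unit-modulus numbers). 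The only delicate point is making the equality case of the triangle inequality precise; everything after that is bookkeeping with additive exponents, so I expect no genuine obstacle.
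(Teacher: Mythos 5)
Your proposal is correct and takes essentially the same route as the paper: both express $\coef{\vec m}{\vec n}$ via \eqref{eq:general13a} as a convex combination of roots of unity, use the equality case of the triangle inequality (the paper phrases it as roots of unity being extreme points) to force all phases on the support of $\lambda$ to coincide with the common value $\coef{\vec m}{\vec n}$, and then exploit additivity of the exponents on that support to pull out the constant $\coef{\vec m}{\vec n}\coef{\vec m^\prime}{\vec n^\prime}$. Your support-set formulation is just the contrapositive of the paper's statement that $\lambda(\vec r,\vec s)=0$ whenever the exponent differs from $k$ or $k^\prime$, so no new idea or gap is involved.
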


\begin{proof}
From (\ref{eq:general13}) follows that, quite generally, $\coef{\vec m}{\vec n}$ are convex combinations
of complex numbers of the form $\omega^k$, with $k$ an integer. Nonetheless, the only such convex combinations
having norm 1 are themselves numbers of the form $\omega^k$, with $k$ an integer, 
therefore $\tau(\vec m,\vec n)$ and $\tau(\vec m',\vec n')$ are, under the hypotheses of 
the theorem, of the form $\omega^k$ and $\omega^{k'}$, respectively.
Similarly for $\coef{\vec m^\prime}{\vec n^\prime}$, which we take to be equal to $\omega^{k^\prime}$. 

Setting $\tau(\vec{m},\vec{n}) = \omega^k$ in \ref{eq:general13a} and conveniently
 rewriting the equation results in
\begin{equation}\label{eq:tomas}
\omega^k=d^{-N}\sum_l\omega^l\sum_{\vec r,\vec s}\lambda(\vec r,\vec s)\delta_{-\vec m\cdot\vec r+\vec n\cdot\vec s,l}
\end{equation}

and a similar expression for $\omega^{k'}$, replacing
$\vec m$ and $\vec n$ with their primed versions. 
Since $\lambda(\vec r,\vec s)$ are positive and sum up to $d^N$, it follows that the right-hand side is a convex sum of $\omega^l$. For it to equal $\omega^k$, an extreme point, only the term with $l=k$ can be different from zero. It follows that 
$\lambda(\vec r,\vec s)=0$,
whenever $-\vec m\cdot\vec r+\vec n\cdot\vec s\neq k$ or $-\vec m^\prime
\cdot\vec r+\vec n^\prime\cdot\vec s\neq k^\prime$.
From this follows straightforwardly that
again $\lambda(\vec r,\vec s)=0$ whenever
\begin{equation}
-(\vec m+\vec m^\prime)\cdot\vec r+(\vec n+\vec n^\prime)\cdot\vec s\neq (k+k^\prime)
\label{eq:border3}
\end{equation}
which implies 
\begin{equation}
\coef{\vec m+\vec m^\prime}{\vec n+\vec n^\prime}=\omega^{k+k^\prime}=
\coef{\vec m}{\vec n}\coef{\vec m^\prime}
{\vec n^\prime}
.
\label{eq:border4}
\end{equation}
\end{proof}


\begin{figure*} 
\centering
\hspace*{.5mm}
\includegraphics[page=1,height=.4\textwidth]{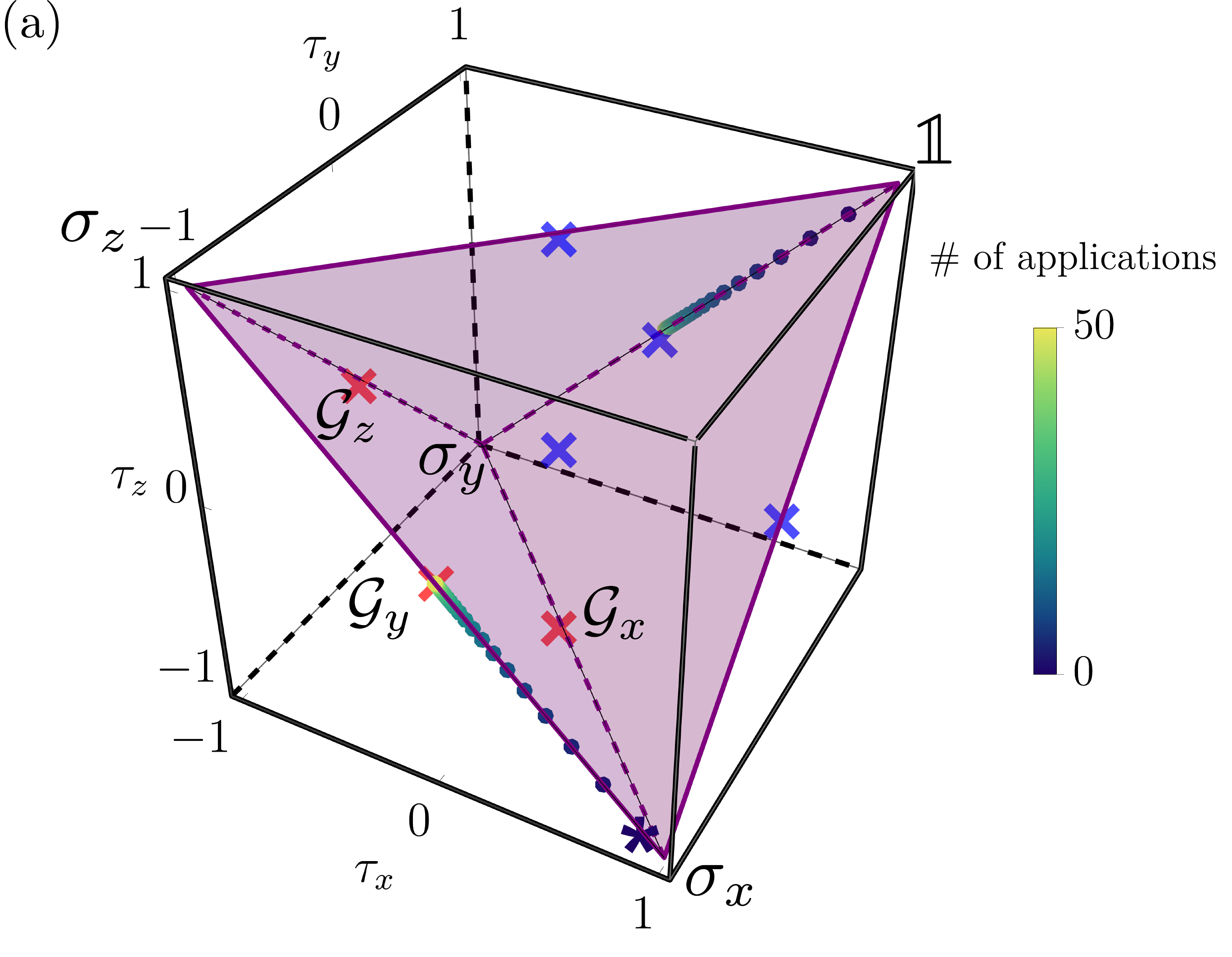}
\includegraphics[height=.4\textwidth]{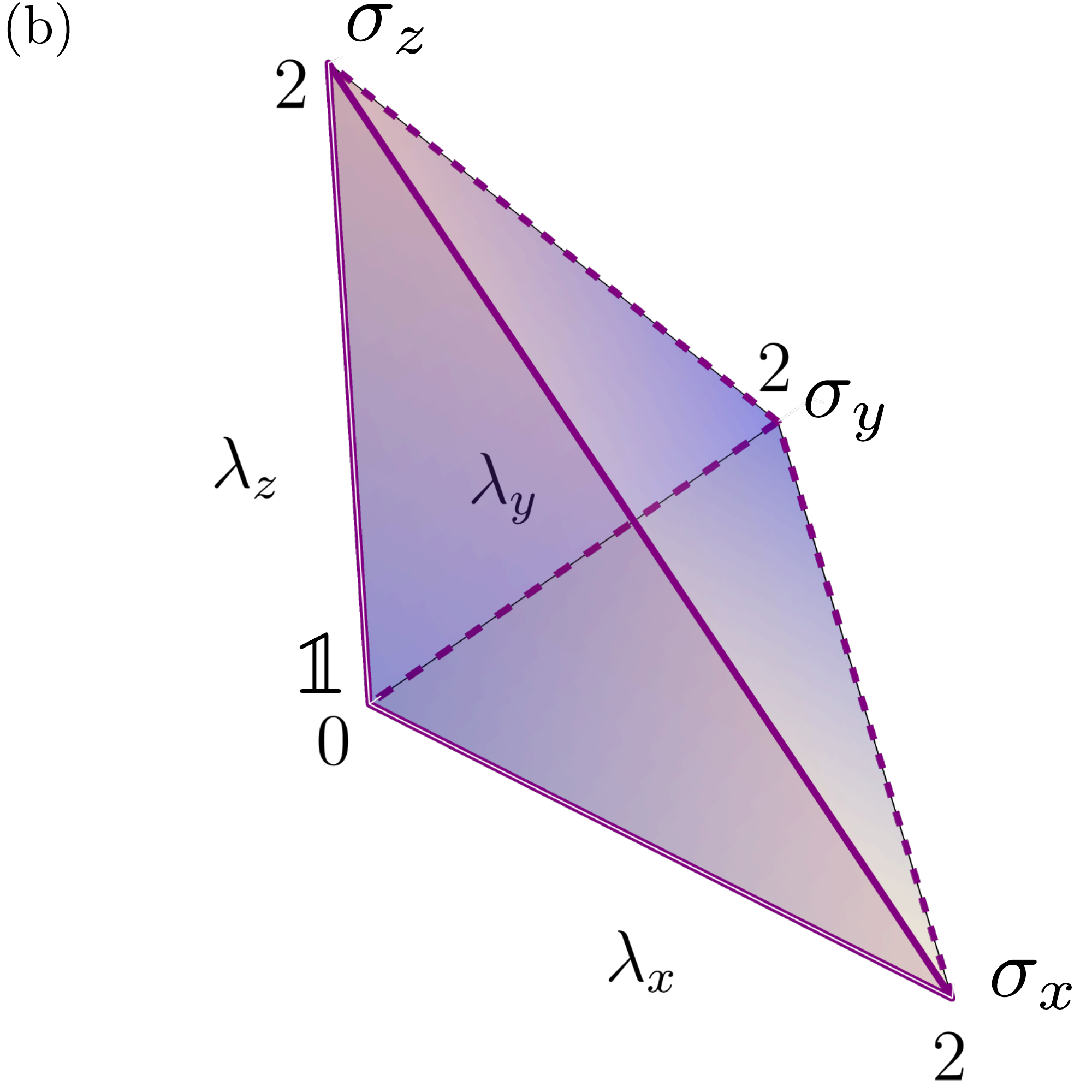}\hspace*{\fill}

\includegraphics[height=2.18cm]{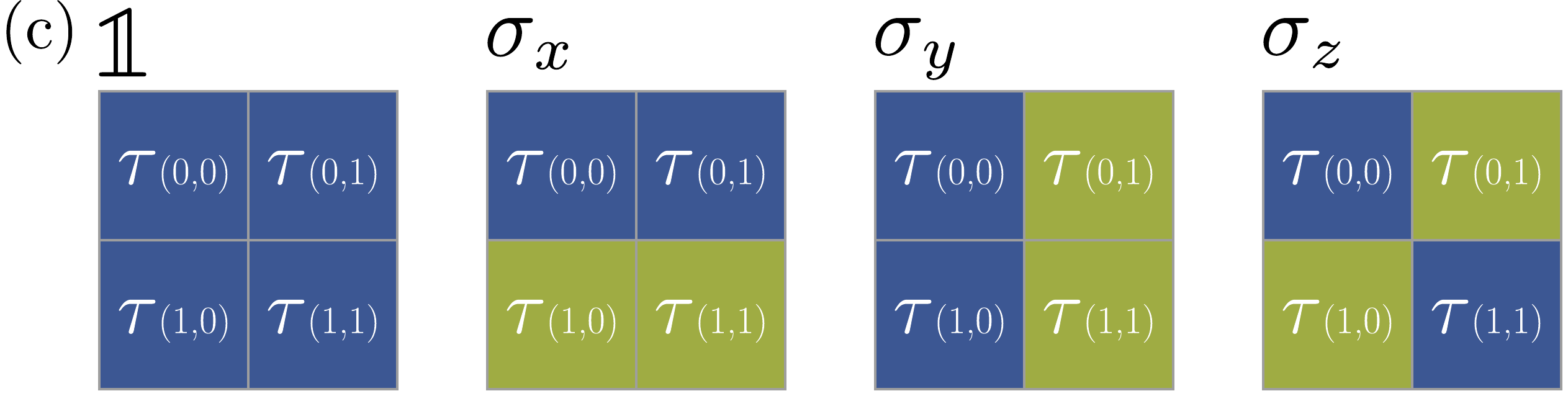}\hspace*{6.2mm}
\includegraphics[height=2.18cm]{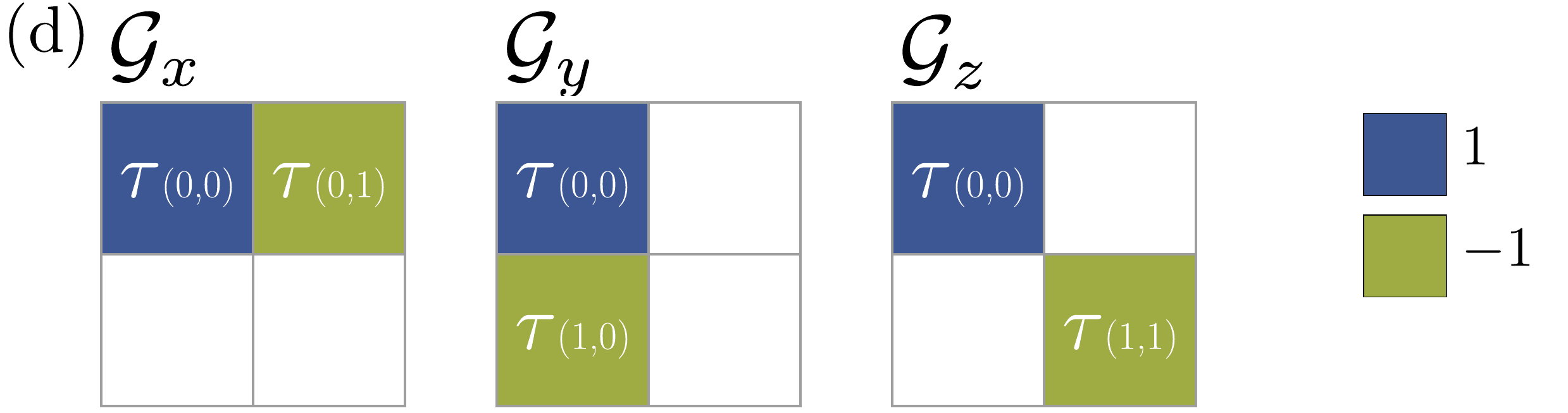}
\caption{
Simplexes of (a) the $\tau(m, n)$ of all single-qubit Weyl channels, in which 
we identify $\tau(0,1) = \tau_x$, $\tau(1,0) = \tau_y$, and $\tau(1,1) = \tau_z$, as usual, and similarly for $\lambda$s; and of (b) the 
eigenvalues $\lambda(m, n)$ of the corresponding Choi-\jami{} matrix. Additionally, we 
depict in (c) and (d) the extreme points of (a) and Weyl erasing generators, respectively.
}
\label{fig:simpleces}
\end{figure*} 

This means that the set of all $(\vec m,\vec n)$, such that $\coef{\vec m}{\vec n}$, has norm 1
form an {\em additive subgroup\/} of the abelian group
\begin{equation}
\mcG=\mathbb{Z}_{d}^{\oplus N}\oplus \mathbb{Z}_{d}^{\oplus N}
\label{eq.border5}
\end{equation}
with respect to vector addition 
modulo $d$. Note that this is one case where a significant difference
arises when the $N$ particles
have different dimensions $d_\alpha$; the vectors $(\vec m, \vec n)$ would then belong to the group
\begin{align}
\mcG=\left(\mathop{\bigoplus}_{\alpha=1}^N\mathbb{Z}_{d_\alpha}\right)\oplus\left(\mathop{\bigoplus}_{\alpha=1}^N
\mathbb{Z}_{d_\alpha}\right).
\label{eq:fdsfads}
\end{align}
In other words, the set $\mcH\subseteq\mcG$ on which $\tau(\vec m,\vec n)$ has norm 1 forms a 
subgroup of $\mcG$ and $\tau$ can be seen as a homomorphism from $\mcH$ to $\mathbb{Z}_d$.

To determine all $\tau(\vec m, \vec n)$ of Weyl channels satisfying eq. \eqref{eq:border1}, we must proceed in 2 steps: 
\begin{enumerate}
\item Determine all subgroups $\mcH\subseteq\mcG$.
\item Determine all homomorphisms from $\mcH$ to $\mathbb{Z}_d$.
\end{enumerate}

In the following section we present an algorithm to determine 
all subgroups $\mcH$ of $\mcG$ and homomorphisms from $\mcH$ to $\bbZ_d$.

We wish to remark that given a quantum channel for which some
of its coefficients $\tau(\vec m,\vec n)$ satisfy Eq. \ref{eq:border1}, the rest
of coefficients are not necessarily null. However, still these are restricted by the
complete positivity condition, Eq. \ref{eq:general12}.

\subsection{Weyl channels } \label{sec:subgroups:and:homomorphisms} 
To determine all $\tau(\vec{m},\vec{n})$ of Weyl channels
satisfying eq. \eqref{eq:border1} we proceed in two steps.
The first step involves identifying all subgroups of $\mathcal{G}$ 
[cf. \eref{eq:fdsfads}]. We begin by stating two relevant facts 
about finite abelian groups, and then discuss how to find the subgroups
for the more general case of an abelian group $\mathcal{G}$, 
which encompasses the majority of our discussion in this section. 
After that, we describe the second step, which
is how to determine all phases of $\coef{\vec m}{\vec n}$ of a WCE channel
by determining all homomorphisms from a subgroup to the roots of unity.
We refer the reader to Appendix \ref{app:examples} to illustrate with 
several examples the algorithm we present in this section.

Whenever $p$ and $q$ are coprime, the group $\bbZ_{pq}$ is isomorphic 
to $\bbZ_p \oplus \bbZ_q$. Therefore, we may
use the prime decomposition  of $d_{\alpha}$ to separate each $\bbZ_{d_{\alpha}}$ in  eq.
\eqref{eq:fdsfads} as a sum of cyclic groups of prime power order.
We proceed in this way for all $\alpha$ in eq.
(\ref{eq:fdsfads}), and then we group the terms corresponding to different primes, so $\mcG$ can be written as
\begin{equation}\label{eq:border6.2}
\mcG = \bigoplus_p\mcG_p,
\end{equation}
with 
$\mcG_p=\bigoplus_i\bbZ_{p^{k_i}}$,
for each prime $p$ that appears in the decomposition
of any of the $d_{\alpha}$.

Since the direct sum of two arbitrary abelian groups of orders $m$ and $n$
that are coprime yield all abelian groups of order $mn$,
we can directly construct all subgroups of $\mcG$ by finding
all the subgroups of each $\mcG_p$. 
In other words, although $\mcG$ in \eref{eq:fdsfads} may have a complicated 
decomposition, we focus only in determining the subgroups of $\mcG_p$,
which will be convenient to write as
\begin{equation}\label{eq:Gp}
\mcG_p = \bigoplus_{\alpha=1}^r \bbZ_{p^{M_{\alpha}} }
\end{equation}
where $M_{\alpha}$ are in non-increasing order.

The group $\mcG_p$ is associated with the sequence 
$\mpart = M_{1} \,\ldots\,M_{r}$, 
which is a \textit{partition} of $M=\sum_{\alpha}M_{\alpha}$.
Therefore, we will refer to $\mcG_p$ as a group of type $\mpart$.
Furthermore, for any partition of $M$ there exists an abelian group of order $p^M$ that is unique
up to an isomorphism \cite{Burnside1911}.
If one has a subgroup $\mcH_p$ of $\mcG_p$, the corresponding partition, let us 
call it $\overline N$, satisfies $N_\alpha \leq M_\alpha$. On the other hand, 
once the choice of the non-increasing order for the partition of the group 
$\mcG_p$, the corresponding partitions for the subgroups $\mcH_p$ inherit a
well-defined order from the group, and the corresponding partitions cannot 
therefore be taken in non-increasing order.

Another important fact about finite abelian groups is that they all 
have a basis; that is, they can be generated by the
integer combinations of a set of elements. In our particular case, 
a simple way of choosing a basis is by picking a generating element for 
each cyclic group in \eref{eq:Gp}. We denote them $\vec e_{\alpha}$,
and therefore an arbitrary $h\in\mathcal H$ can be {\em uniquely\/} expressed as
\begin{equation}
h=\sum_{\alpha=1}^r n_{\alpha}\vec e_{\alpha},
\label{eq:border7}
\end{equation}
where $n_{\alpha}\in\mathbb{Z}_{p^{M_{\alpha} }}$, 
and the multiplication of a group element
by an integer $m$ is defined as the addition of the group element to itself repeated $m$ times. The 
number $r$ of elements in the basis is independent of the choice of basis and 
it is known as the group's
rank $r$.

The general idea for finding all subgroups of $\mcG_p$ is 
to determine a subset of subgroups such that, upon applying all automorphisms 
$T: \mcG_p \mapsto \mcG_p$, all  others are found. 
We will say that two subgroups of $\mcG_p$ are $T$-isomorphic when there is 
an automorphism $T$ mapping one to the other. 
Then, to find the subgroups of $\mcG_p$ we first 
determine any subset with the maximum number of subgroups that are not $T$-isomorphic.
We call these  ``representative subgroups''.
By definition, applying all automorphisms $T$ (which we describe how to find in Appendix \ref{app:b}) to the representative subgroups all other subgroups of $\mcG_p$ are found.

Note the difference between the concept of isomorphism for the subgroups, and the concept 
of $T$-isomorphism. The latter depends not only on the group structure of the subgroup 
${\cal H}$, but also on the way in which it is embedded in the group ${\cal G}_p$. For 
instance, we can embed the group $\bbZ_2$ in the 
group $\bbZ_{2^2}\oplus \bbZ_2$ either as a 
subgroup of the first summand or as a subgroup of the second. In other words, the 
partition $\overline M$ describing the full group is $\overline M = 2\,1$
and the subgroup $\bbZ_2$ can be embedded with a partition  
$0\,1$ as well as $1\,0$.
The two subgroups, being both isomorphic to $\bbZ_2$, are abstractly isomorphic, but that isomorphism cannot be extended to an isomorphism of $\bbZ_4\oplus \bbZ_2$. 

All subgroups of $\mathcal G_p$ are found applying all its
automorphisms $T$ to the subgroups generated by the bases 
\begin{equation}\label{eq:border:B:1}
\mathcal B=\{ p^{s_1}\vec e_1, \ldots, p^{s_r}\vec e_r\}\quad 0 \leq s_{\alpha} \leq M_{\alpha}.
\end{equation}
Nevertheless, more than one different selection $\mathbb S = \{s_\alpha\}$
may determine two bases of subgroups $T$-isomorphic, in other words, that are connected by an automorphism of $\mathcal G_p$. For example, consider a group $\mathcal G_p$ of 
type $\mpart=2\,2\,1\,1$. The partitions $\mathbb S=0\,1\,0\,1$ and $\mathbb S'=1\,0\,1\,1$ determine bases of $T$-isomorphic subgroups, because 
the automorphism defined as $T(\vec{e}_1) = \vec{e}_2$, 
$T(\vec{e}_2) = \vec{e}_1$, 
$T(\vec{e}_3) = \vec{e}_4$ and $T(\vec{e}_4) = \vec{e}_3$ maps one 
to the other. 
From each of these  $T$-isomorphic sets of subgroups  we can pick an
arbitrary element, which will be called \textit{the representative subgroup}. 

To find the representative subgroups we need a criterion to determine when two
bases $\mathcal B$ of the form  \eqref{eq:border:B:1} generate
$T$-isomorphic groups. Let us denote $\tilde M_1,\ldots,\tilde M_q$ the $q$
different values in the sequence of numbers in $\mpart$ (for instance, if
$\bar{M} = 2211$, then $q=2$ and $\tilde{M}_1 = 2, \tilde{M}_2 =1$).
Furthermore, we define the subset $S_j=\{s_\alpha, \, \forall\, \alpha
:M_\alpha=\tilde M_j\}$ of $\mathbb S$, that is, $S_j$ is the subset of
$\mathbb{S}$ formed by all the  $s_\alpha$ whose $\alpha$s correspond to the
indices of the $M_\alpha$ that are equal to $\tilde{M}_j$. Then, the criterion
is the following: two different sets $\mathbb S$ and $\mathbb S'$ determine
bases of $T$-isomorphic subgroups whenever their corresponding subsets $S_j$
and $S_j'$ are the same for all $j$.

We are ready to describe the complete algorithm to determine 
all the subgroups of a given group $\mcG$. 
First, decompose $\mathcal G$ as a sum of prime power order groups 
$\mathcal G_p$. For every $\mathcal G_p$, find all sets 
$\mathbb{S} = \{s_{\alpha}\}$ and discriminate between them to find the only
ones that determine representative subgroups.
Then apply to them all automorphisms $T$ of $\mathcal G_p$, so
all subgroups of $\mathcal G_p$ will be found, albeit with repetitions. 
A description of the 
group of automorphisms of an arbitrary abelian group $\mcG$ is provided in \cite{Hillar2007}, 
and the technique is summarized for completeness' sake in Appendix \ref{app:b}.
Finally, to find the subgroups of $\mathcal G$ apply the direct sum 
between all different subgroups of each $\mathcal G_p$.

Furthermore, a way to count the total number of subgroups of $\mcG_p$
is already known in the literature. 
Since any abelian group of prime power order can only have subgroups that are also of prime
power order, subgroups of order $p^L$, with $L<M$,  can also be characterized by a partition
$\lpart$  of $L$.
It is shown in \cite{Burnside1911, Birkhoff1935} that necessary and sufficient conditions for the partition $\lpart$ to correspond
to a possible subgroup of the group determined by the partition 
$\mpart$ of $M$ are
\begin{subequations}
\begin{align}
L_{\alpha}&=0\quad (\alpha >r),
\label{eq:border9a}\\
L_{\alpha}&\leq M_{\alpha},
\label{eq:border9b}\\
L_{\alpha}&\geq L_{\alpha+1}.
\label{eq:border9c}
\end{align}
\label{eq:border9}
\end{subequations}
An expression for the number of different subgroups of type $\lpart$ is 
already known in the literature. For that matter, we refer the reader to Appendix 
\ref{app:NumberOfSubgroups}.

To fully determine the coefficients $\coef{\vec m}{\vec n}$ with norm 1, 
we interpret them as a function 
%
%
that maps $\mcH$ to the group of roots of unity $\omega^j$. We consider 
$\coef{\vec m}{\vec n}=\omega^{\phi(\vec m,\vec n)}$,
thus we are looking for all homomorphisms 
$\phi:\bigoplus_{M_\alpha}\bbZ_{p^{M_\alpha}} \mapsto \bbZ_{p^{M_1}}$.
To determine one of such functions uniquely, it is sufficient
to specify the values of $\phi$ on a basis of $\mcH$, as described in Appendix
\ref{app:c}. 




Note that all the above remarks greatly simplify when $d$ is a prime number. In that case 
the group $\mcG$ is additionally a vector space. The set of subgroups
can then be described as the set of vector subspaces using the usual techniques of linear algebra. 
All the partitions described above then reduce to partitions of the type where $M_\alpha$ is either 
1 or 0, and the partition is fully characterized by the number of its non-zero elements, which correspond
to the subspace's {\em dimension}. Finally, the homomorphism $\tau$ can be described as a 
linear map from the vector space $\mcG$ to the field $\mathbb{Z}_d$, which is once more 
straightforwardly  described in terms of linear algebra. 
\section{Weyl erasing channels} \label{sec:WCE} 
\subsection{Generalities} 
\label{subsec:gener1}

%

In this section we focus on a particular class of Weyl channels; those for which 
 $\abs{\tau(\vec m',\vec n')}=0$ or $1$. 
In other words, we will discuss Weyl channels that completely erase, preserve or introduce 
specific phases to the projections of the density matrix of a system of qudits onto the 
Weyl matrices basis. We will refer to this subset of Weyl channels as Weyl erasing channels.

Weyl erasing channels are an interesting subset of Weyl channels, as they arise
from the composition of one or more of these channels an infinite number of times.
For instance, the infinite composition of any Weyl channel
for which all $\abs{\tau(\vec m,\vec n)}<1$, except $\tau(\vec 0,\vec 0)=1$,
results in the completely depolarizing channel.
In the general case, the repeated application of a Weyl channel 
may not converge to a single Weyl erasing 
channel, however it oscillates between two or more such channels. 
For example, applying many times the single-qubit Weyl
channel depicted with an asterisk in Fig. \ref{fig:simpleces}(a) asymptotically oscillates 
between the channel collapsing the Bloch sphere onto the $y$ axis and other channel 
collapsing it onto the $y$ axis while reflecting it across
the $x$-$z$ plane. This oscillation continues indefinitely between $\vec\tau=(1,0,1,0)$
and $\vec\tau'=(1,0,-1,0)$.


In the following, we derive a Kraus representation of Weyl erasing channels
that will provide insight into the physical implementation of these channels.
We then focus on deriving an expression of the eigenvalues of the Choi-\jami{}
matrix of Weyl erasing channels exclusively, as we already have an expression for all Weyl
channels [\textit{c.f.}~\eqref{eq:general12}].  
For this, we begin presenting an algorithm that uses the mathematical machinery developed 
in section \ref{sec:subgroups:and:homomorphisms} to find all $\tau(\vec m,\vec n)$ of 
any Weyl erasing channel: 
\begin{enumerate}
	\item Find all sets of indices $\{(\vec m,\vec n):\abs{\tau(\vec m,\vec n)}=1\}$ by 
	determining all subgroups $\mcH \subset \mcG$, with $\mcG$ that in 
	\eqref{eq:border6.2}.
	\item Find the values of $\tau(\vec m,\vec n)=\omega^{\phi(\vec m,\vec n)}$
	for all $(\vec m,\vec n)\in\mcH$ by determining all homomorphisms $\phi:\mcH\mapsto
	\bigoplus_p\bbZ_{p^{M_{1}(p) }}$, with $M_1(p)$ such that $\bbZ_{p^{M_{1}(p) }}$
	denotes the largest order cyclic group for every $\mcG_p$ in \eqref{eq:border6.2}.
	\item Assign $\tau(\vec m',\vec n')=0$ for all $(\vec m',\vec n')\notin\mcH$.
\end{enumerate}
While an exhaustive enumeration of all Weyl erasing channels for a large number $N$ of
particles is not practical, the construction provides insights into the mathematical
structure of this set. 
In fact, we show examples of Weyl erasing channels for a
single-qubit, a 4-level system and a system composed by both of them in
Figs.~\ref{fig:z2z2:subgroups}-\ref{fig:WCEPs}.
We remark that the algorithms to determine $\tau(\vec m,\vec n)$ of Weyl 
and Weyl erasing channels are both the same up until determining all homomorphisms.
Then, for the former one may assign any value to the $\tau(\vec m',\vec n')\notin\mcH$
as long as they 
keep the channel completely positive, whereas for the latter one must assign the value zero
to all $\tau(\vec m',\vec n')\notin\mcH$.

Let us now evaluate the eigenvalues of the Choi matrix of a Weyl erasing
channel. To be consistent with the previous section, we  consider Weyl erasing
channels of a system of $N$ particles, each with 
dimension a power of $p$, so the group in question is $\mcG_p$ [see eq. \eqref{eq:Gp}]. 
The group's rank is then $r=2N$.
Recall these channels have
$\coef{\vec m}{\vec n}=0$ for all $(\vec m,\vec n)\notin\mcH$, thus, we find from (\ref{eq:extreme6}) 
\begin{equation}
\lambda(\vec r,\vec s)= d^{-N}
\sum_{(\vec m,\vec n)\in\mcH}\coef{\vec m}{\vec n}
\prod_{\alpha=1}^{N}
\omega_{\alpha}^{m_\alpha r_\alpha - n_\alpha s_\alpha}
\label{eq:WCE1}
\end{equation}
where $\omega_{\alpha}=\exp\qty(2\pi i/p^{M_{2\alpha} })$ 
Since the composition of two Weyl channels is simply to 
multiply both sets of $\tau(\vec m,\vec n)$, we can see that $\tau(\vec m,\vec n)$ 
of Weyl erasing channels are those of an extreme Weyl channel 
[\textit{c.f.}~\eqref{eq:extreme4}] for all $(\vec m,\vec n)\in\mcH$, and
$\tau(\vec m',\vec n')=0$ for all $(\vec m',\vec n')\notin\mcH$. Hence, substituting 
$\tau(\vec m,\vec n)$, and considering $\omega_{ p^{ \alpha }}^k = 
\omega_{p^{\beta}}^{p^{(\alpha - \beta)k}}$, $\alpha > \beta$, we can write
\begin{align}
	\lambda(\vec r,\vec s)&=
	\sum_{(\vec m,\vec n)\in\mcH}
	\omega_{p^{M_{1} }}^{f(\vec m,\vec n)},
	\label{eq:WCE5:2}
\end{align}
where $f(\vec m,\vec n)=\sum_{\alpha=1}^N 
p^{M_1 - M_{2\alpha - 1}}(m_\alpha(r_\alpha - r_{0,\alpha}) 
- n_\alpha(s_\alpha-s_{0,\alpha}))$. 
To evaluate this expression we note that $f$ is an homomorphism 
$f:\mcG_p\mapsto \bbZ_{p^{M_2}}$. Therefore, the sum evaluates to zero unless $f$ 
maps $\mcH$ to the trivial group:
\begin{align}
	\lambda(\vec r,\vec s)&=\begin{cases*}
		\left|
		\mcH
		\right|&\qquad $f(\vec m,\vec n)=0$ 
		for all $(\vec{m},\vec{n})\in\mcH$\\
		0&\qquad otherwise,
	\end{cases*}
	\label{eq:WCE9}
\end{align}
where $|\mcH|$ is defined as the number of elements of $\mcH$. 
Furthermore, let us consider the group
$\mcH^\perp=\{(\vec m,\vec n):f(\vec m,\vec n)=0\}$.
The only non-zero $\lambda(\vec r,\vec s)$ are those with indices 
$(\vec r,\vec s)$ such that  $(\vec r - \vec r_0,\vec s - \vec s_0) \in \mcH^\perp$.

Finally, having obtained the eigenvalues of the Choi-\jami{} 
matrix of Weyl erasing channels we describe their canonical Kraus representation.
Recall that that Weyl matrices are the Kraus operators of Weyl channels  with probabilities
equal to $\lambda(\vec{r}, \vec{s})$ [\textit{c.f.} \eref{eq:extreme3.1}]. 
It then follows that Kraus operators of Weyl erasing channels are the subset of
Weyl matrices $U(\vec{r}, \vec{s})$, 
with $(\vec r - \vec r_0,\vec s - \vec s_0) \in \mcH^\perp$,
each with probability $|\mcH^\perp|/|\mcG|$. 

\subsection{Generators} 
\label{subsec:gener2}


In the following, we investigate the smallest subset of Weyl erasing channels 
which, under composition, generate the whole set. For the sake of 
simplicity, we start by finding the generators of Weyl channels with 
$\tau(\vec m,\vec n)=0$ or $1$, as these are Weyl channels characterized 
only by subgroups of $\mcG$. Subsequently, we move to the most general 
Weyl erasing channels that either preserve, erase or introduce phases to 
the density matrix.

We shall determine those subgroups that are Indecomposable, in the sense that they
cannot be generated as the non-trivial composition of two Weyl channels. We call these the generator subgroups.
We consider once again the group $\mcG$ shown in 
eq. \eqref{eq:border6.2}, thus, we first discuss how to determine the generator 
subgroups of $\mcG_p$, and, from those, determine the generator subgroups of $\mcG$.
Similarly to what we did in section \ref{sec:subgroups:and:homomorphisms},
the generator subgroups $V_p$ of $\mathcal{G}_p$ can be found 
constructing representative generator subgroups $V_p^\star$ 
and applying all automorphisms of $\mcG_p$ to them.

We claim that a representative subgroup is a generator $V_p$ of $\mcG_p$
if and only if its basis is of the form
\begin{align}\label{base-gen:2}
\mathcal{B}_{V_p} = 
\{\vec{e}_1, \ldots, \vec{e}_{j-1}, p^{s_j} \vec{e}_j, \vec{e}_{j+1}, 
  \ldots, \vec{e}_r\},\, 1\leq s_j \leq M_j.
\end{align}
This is verified as follows.
Consider a subgroup $\mathcal{H}_p$ with basis \eqref{eq:border:B:1}
such that its set of values $\mathbb{S}=\{s_{\alpha}\}_\alpha$
has two (or more) values $s_{\beta},s_{\gamma}\ne 0$. That is, 
$\mcH_p$ has the basis 
$\mathcal{B}_{\mcH_p} = \{\vec{e}_1, \ldots, \vec{e}_{\beta - 1}, 
p^{s_\beta} \vec{e}_\beta, \ldots, \vec{e}_{\gamma - 1}, 
p^{s_\gamma} \vec{e}_\gamma, \ldots, \vec{e}_r\}$.
Then, $\mathcal{H}_p$ can be expressed as the non-trivial intersection of the group 
$g_p'$ with basis 
$\mathcal{B}_{V'_{p}} = \{\vec{e}_1, \ldots, \vec{e}_{\beta - 1}, 
p^{s_\beta} \vec{e}_\beta, \ldots, \vec{e}_r\}$
and another group $V_{p}''$ with basis 
$\mathcal{B}_{V_{p}''} = \{\vec{e}_1, \ldots, \vec{e}_{\gamma - 1}, 
p^{s_\gamma} \vec{e}_\gamma, \ldots, \vec{e}_r\}$.
Now, we check that a group $\mathcal{H}_p$ with a 
basis of the form \eqref{base-gen:2}
cannot be expressed as an intersection of two subgroups containing $\mathcal{H}_p$
\textit{strictly}. Note that groups $\mcH_p^\prime$ satisfying $\mcH_p\subsetneq\mcH_p^\prime\subset\mcG_p$
must have a basis of the form
$\{\vec{e}_1, \cdots , p^{s_j'} \vec{e}_j, \cdots, \vec{e}_r \}$,
with $s_j' < s_j$. Therefore, if we have two such groups $\mathcal{H'}_p$ and
$\mathcal{H''}_p$, then the group arising from intersection $\mcH_p' \cap \mcH_p''$
has a basis  $\{\vec{e}_1, \cdots , p^{\max(s_j', s_j'')} \vec{e}_j, \cdots, \vec{e}_r \}$,
which doesn't generate $\mathcal{H}_p$ because the integer span of 
$p^{s_{j'}}\vec e_{j'}$ or $p^{s_{j''}}\vec e_{j''}$ are strictly larger than the integer span 
of $p^{s_{j}}\vec e_{j}$.

Finally, to find all generator subgroups $V$ of $\mathcal{G}=\bigoplus_{p} \mathcal{G}_{p}$
we proceed as follows. We begin by finding the generator subgroups of $\mcG_p$.
Then, the generator subgroups $V$ are the subgroups of the form 
\begin{align}\label{eq:subgroups}
	V = \bigoplus_p H_p,
\end{align}
where $H_p = \mcG_p$ for all $p$ except for one $p=p'$, for which
$H_{p'}=V_{p'}$.
To see this, notice that  $V$ is a generator since
any other subgroup $\mathcal{H'}$ strictly containing it
must also have $H'_p = \mathcal{G}_{p}$ for $p \neq p'$ 
and $H'_{p'} \subsetneq V_{p'}$.
Therefore, for two such subgroups $\mathcal{H'}$ and $\mathcal{H''}$ to generate $V$,
$H'_{p'}$ and $H''_{p'}$ should generate $V_{p'}$, which is impossible since
$V_{p'}$ is a generator of $\mathcal{G}_{p'}$.
On the other hand, any subgroup of $\mcG$ that is composed as the sum of
groups $\mcG_p$ except for two (or more) primes $p'$ and $p''$
in which the sum is of any corresponding generator subgroup, 
may be generated by two subgroups $\mcH$ of the form 
\eqref{eq:subgroups} with suitable generators $H_{p'}=V_{p'}$ and 
$H_{p''}=V_{p''}$.

To summarize up to this point: to determine the generators of Weyl erasing channels with 
$\tau(\vec m,\vec n) =0,1$ one is only required to determine the generator 
subgroups of the corresponding group in which the indices $(\vec m,\vec n)$ of
$\tau(\vec m,\vec n)=1$ live. To determine the generators of Weyl erasing channels 
that also introduce phases one must still find generator subgroups, and a suitable
homomorphism for each of them as well.

More specifically, a Weyl erasing generator with $\abs{\tau(\vec m,\vec n)}=0,1$
is completely characterized by a generator subgroup of $\mcG$ and a
homomorphism $\phi:\mcG \mapsto \bigoplus \bbZ_{p^{M_{2}(p) }}$, 
with $\bbZ_{p^{M_{2}(p) }}$ is the largest order cyclic group for every $\mcG_p$.
The composition of two channels with subgroups $\mcH_1$ and $\mcH_2$ and
homomorphisms $\phi_1$ and $\phi_2$ yields a channel corresponding to
$\mcH_1\cup\mcH_2$ and homomorphism $\phi_1+\phi_2$. Therefore, for every
generator subgroup $V$, if we define a basis $\phi_{V}^\alpha$ of the space of
homomorphisms, then the set of channels corresponding
to each generator subgroup $V$, together with any element 
$\phi_{V}^\alpha$ mapping $V$ to the whole group $\bbZ_{p^{M_{2}(p) }}$,
form a set of generators. For instance, we show a set of Weyl erasing generators 
of a $4$-level system in Fig. \ref{fig:q4bit:wce:generators}

To avoid misunderstandings, note that the ``generators'' for channels with 
$\tau(\vec m,\vec n)=0,1$, those characterized completely by a generator subgroup 
(and one may say that also by a homomorphism $\phi=0$),
are not generators of the whole set of Weyl erasing channels 
($\abs{\tau(\vec m,\vec n)}=0,1$). The former channels can always be obtained 
through iterated composition of channels with a non-zero homomorphism.
For example, the single-qubit Weyl erasing channels depicted with a blue point in Fig.
\ref{fig:simpleces}(a), except for the one located at $\vec\tau=(0,0,0)$, are Weyl 
generator channels with $\tau(m,n)=0,1$. However, they can be obtained from composing
two times each of the three Weyl generator channels depicted with a red point.
Those are the generators of all Weyl erasing channels of a single-qubit. 

\begin{figure}
\centering
\includegraphics[width=\columnwidth]{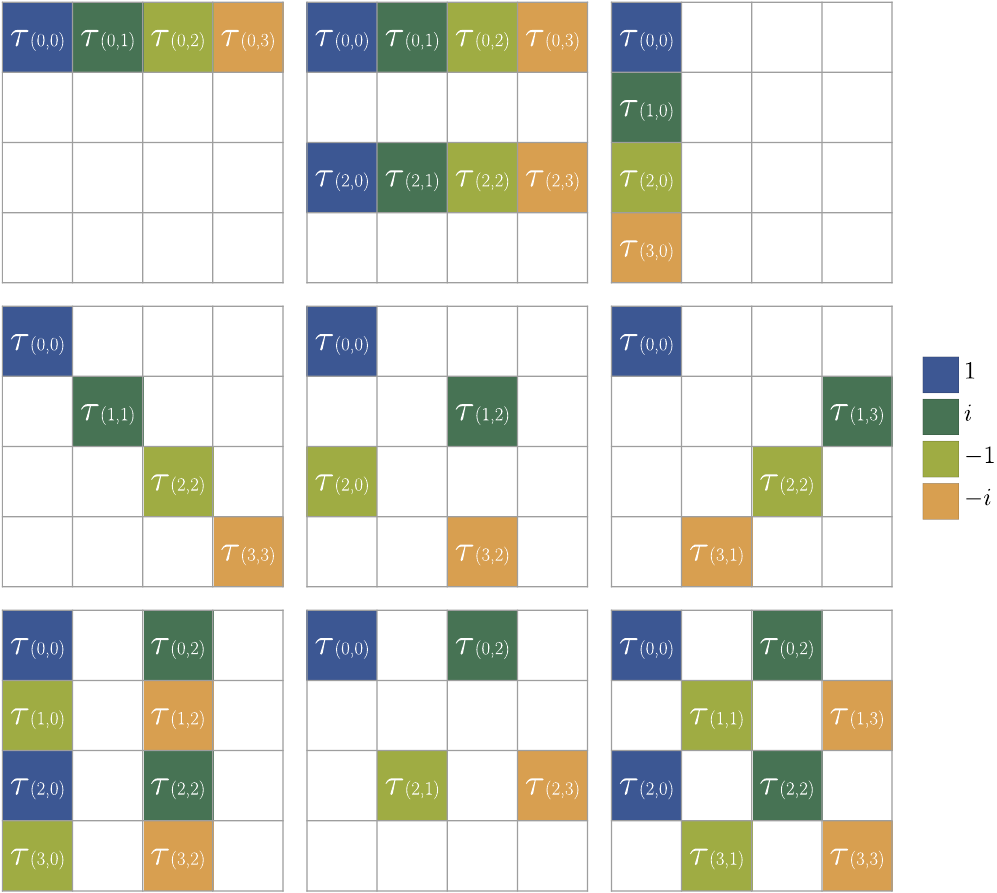}
\caption{Weyl erasing generators for a single-particle of dimension $d=4$. 
Concatenating in all possible ways the corresponding channels one obtains all the 
Weyl quantum channels of a 4-level system with $\abs{\tau(m,n)}=0,1$.
}
\label{fig:q4bit:wce:generators}
\end{figure}

\section{Conclusions}\label{sec:conclusions} 

In this paper, we explored a class of quantum channels of multipartite systems with
different-dimensional particles. Our focus was on extending the study 
of Pauli and Weyl channels; the former have been studied for 
systems of many qubits and the latter for single $d$-level systems.

We begin by introducing the multi-particle Weyl operators $U(\vec m,\vec n)$
to define a Weyl map as a diagonal map in this basis; its eigenvalues are denoted
by $\tau(\vec m,\vec n)$.
We derived the constraints on 
$\tau(\vec m,\vec n)$ for a Weyl map to preserve the trace and 
hermiticy of the density matrix, as well as to be completely positive.
For the latter, we diagonalized its Choi-\jami{} matrix and found the 
linear relationship between these eigenvalues and $\tau(\vec m,\vec n)$.

Several features of Weyl channels emerged from our study. We
identified the extreme points of the set, 
which correspond to Weyl operators, highlighting 
the random unitary nature of Weyl channels.
Additionally, we established a subgroup structure within a Weyl channel,
showing that the indices ${(\vec m,\vec n)}$ of all $\abs{\tau_{\vec m,\vec n}} = 1$ 
is a subgroup of the direct product of groups 
$\bbZ_{d_\alpha} \oplus \bbZ_{d_\alpha}$, where $d_\alpha$ represents the 
dimension of the $\alpha$-th particle.

Furthermore, we introduced Weyl erasing channels, which are Weyl channels that
either preserve, erase, or introduce phases to the Weyl operator basis. 
These extend the concept of \textit{component erasing} channels.
Given that all channels of this type exhibit the subgroup structure, 
with the remaining $\tau_{\vec m,\vec n}$ set to zero,
we were able to find the smallest subset which generate, under composition, 
the whole set.

Our work contributes to the understanding of many-body quantum channels, and,
moreover, to the dynamics of $d$-level systems, which have growing importance 
for both theoretical and practical purposes. Future research directions include
investigating divisibility, non-Markovianity, channel capacity, and the subset of
entanglement-breaking channels among other properties of the Weyl channel set.
\begin{acknowledgments} 

Support by projects CONACyT 285754, 254515 and UNAM-PAPIIT IG100518, IG101421 is
acknowledged.  
J.~A.~d.~L. acknowledges a scholarship from CONACyT.
J.~A.~d.~L. would like to thank Cristian Álvarez for valuable dicussions 
about finite groups. A. F.
acknowledges funding by Fundação de Amparo à Ciência e Tecnologia do Estado de
Pernambuco - FACEPE, through processes BFP-0168-1.05/19 and BFP-0115-1.05/21.
D. D. acknowledges OPTIQUTE APVV-18-0518, DESCOM VEGA-2/0183/21 and Štefan
Schwarz Support Fund.
\end{acknowledgments} 
\appendix
\section{Computation of the Choi--\jami{} matrix} \label{app:choi} 

This appendix demonstrates that the Choi-\jami{} matrix of any diagonal map $\mcE$ takes the form
\begin{align}\label{eq:choi_general}
\mcD = 
\frac{1}{d^N} \sum_{\vec m,\vec n} \tauv \Uv \otimes \Uv^\star,
\end{align}
whenever ${U(\vec m,\vec n)}$ form an orthogonal basis of unitaries~\cite{ruskai}.

The Choi--\jami{} matrix of a quantum map $\mcE$ is defined as follows:
\begin{equation}\label{eq:app1.1}
\mcD=
\frac{1}{d^N}
\sum_{\vec k,\vec l}\mcE\left[
\ket{\vec k}\bra{\vec l}
\right]\otimes
\left(
\ket{\vec k}\bra{\vec l}
\right).
\end{equation}
We may now express $\ket{\vec k}\bra{\vec l}$ in terms of any orthogonal basis 
of unitaries $U(\vm, \vn)$
\begin{align}\label{eq:app1.2}
\dyad{\vec k}{\vec l} = 
\frac{1}{d^N}
\sum_{\vec m,\vec n} 
\Tr\qty(U^\dagger(\vm, \vn)\dyad{\vec k}{\vec l}) 
U(\vm, \vn).
\end{align}
Substituting this in the expression (\ref{eq:app1.1}) for $\mcD$ yields.
\begin{widetext}
\begin{subequations}
\begin{align}	
\frac{1}{d^N}
\sum_{\vec k,\vec l} \mcE\qty(\dyad{\vec k}{\vec l}) \otimes \dyad{\vec k}{\vec l}
&=
\frac{1}{d^{2N}}
\sum_{\vec k,\vec l,\vm,\vm',\vn,\vn'} 
\mcE\qty[ \Tr\qty(U^\dagger(\vm, \vn)\dyad{\vec k}{\vec l})U(\vm, \vn) ] 
\otimes \qty[ \Tr\qty(U^\dagger(\vm', \vn')\dyad{\vec k}{\vec l})U(\vm', \vn') ] \\
&= 
\frac{1}{d^{2N}}
\sum_{\vec k,\vec l,\vm,\vm',\vn,\vn'} 
\Tr\qty(U^\dagger(\vm, \vn)\dyad{\vec k}{\vec l})
\Tr\qty(U(\vm', \vn')\dyad{\vec l}{\vec k})
\tau(\vm, \vn) U(\vm, \vn) \otimes U(\vm', \vn')^* ,\\
\intertext{where we have used the complex conjugate of \eqref{eq:app1.2}. Then,
using the definition of trace it follows}
&= 
\frac{1}{d^{2N}}
\sum_{\vec k,\vec l,\vm,\vm',\vn,\vn'} 
\matrixel{l}{U^\dagger(\vm, \vn)}{k} \matrixel{k}{U(\vm', \vn')}{l}    
\tau(\vm, \vn) U(\vm, \vn) \otimes U(\vm', \vn')^* \\
&= 
\frac{1}{d^N}
\sum_{\vm,\vn} \tau(\vm, \vn) U(\vm, \vn) \otimes U(\vm, \vn)^* .
\end{align}
\end{subequations}
\end{widetext}
This expression can also be obtained also using a recently result of Siewert, in which 
he derives an expression for the maximally entangled state in terms of an 
arbitrary orthogonal basis~\cite{siewert2022orthogonal}.
\section{Eigenvalues of $U(m,n)$ and of $U(m,n) \otimes U(m,n)^*$}
\label{app:eigenvalues}
We will find the eigenvalues of $U(m,n)$.
Since it is unitary, we will express the eigenvalues as $\omega^c$, with $c\in\mathbb{R}$.
Let us consider an eigenvector $|\phi\rangle = \sum_r \phi(r) |r\rangle$ with eigenvalue $\xi = \omega^c$.
The eigenvalue equation for $U(m,n)$ leads to the following relation:
\begin{equation}
\label{eq: recurrence}
\phi(r+n) = \omega^{-mr} \omega^c \phi(r).
\end{equation}
Starting with an arbitrary index $r$ and applying this recursion equation $l-1$ times, we obtain
\begin{equation}
\phi(r+nl) = \omega^{-lmr - \frac{1}{2} l(l-1)mn + cl} \phi(r).
\end{equation}
In the particular case in which $l = l' := \dfrac{d}{\mathrm{gcd}(d,n)}$ we may 
use that $l'n$ is a multiple of $d$, so:
\begin{equation}
\phi(r) = \omega^{-l'mr - \frac{1}{2} l'(l'-1)mn + cl'} \phi(r),
\end{equation}
which implies that (for values of $r$ such that $\phi(r) \neq 0$): 
\begin{equation}
-l'mr - \frac{1}{2} l'(l'-1)mn + cl' = sd
\end{equation}
for some integer $s$. Therefore:
\begin{equation}
c = \dfrac{sd}{l'} + \dfrac{1}{2}(l'-1)mn + mr = \mathrm{gcd}(d,n) s + mr + \dfrac{1}{2}(l'-1)mn.
\end{equation}
So we conclude that all eigenvalues of $U(m,n)$ necessarily have the form 
\begin{equation}
\omega^{\mathrm{gcd}(d,n) s + mr + \frac{1}{2} (l'-1)mn}
\end{equation}
for $s$ and $r$ integers. 

Furthermore, taken modulo $d$, the set $\{\mathrm{gcd}(d,n) s + mr \; |\; s,r \in \mathbb{Z}_d\}$ is equivalent to $\{n s + mr \; |\; s,r \in \mathbb{Z}_d\}$ which is also equivalent to $\{\mathrm{gcd}(m,n) k \; |\; k \in \mathbb{Z}_d\}$. Therefore, the $d$ eigenvalues of $U(m,n)$ have are
\begin{equation}
\label{eq: eigenvalues-U}
\xi = \omega^{\mathrm{gcd}(m,n) k + \frac{1}{2}(l'-1)mn}.
\end{equation}
From this, it is straightforward that the eigenvalues of 
%
%
$U(m,n) \otimes U(m,n)^*$ are
\begin{equation}
\mu(r,s) = \omega^{\mathrm{gcd}(m,n) k - \mathrm{gcd}(m,n) h}.
\end{equation}
This set is equivalent to
\begin{equation}
\label{eq: eigenvalues_UxU}
\mu(r,s) = \omega^{mr - ns}
\end{equation}
where $r,s$ are integers modulo $d$.
\section{Automorphisms of finite abelian groups}
\label{app:b}

In the following we describe  the bijective homomorphisms $T$ of an arbitrary abelian group. 
Without loss of generality we limit ourselves to groups that are the direct sum of groups
of the type $\mathbb{Z}_{p^M}$, specifically
\begin{equation}
\mcG=\bigoplus_{\alpha=1}^r\mathbb{Z}_{p^{M_\alpha}}.
\label{eq:b1}
\end{equation}
To fix notations, we shall work with a fixed basis $\vec e_\alpha$, $1\leq\alpha\leq r$, where $r$ is the
{\em rank\/} of $\mcG$. The map $T$ is therefore uniquely determined by the values of 
$T\vec e_\alpha$. Since $\vec e_\alpha$ is a basis, we can write
\begin{equation}
T\vec e_\alpha=\sum_{\beta=1}^rt_{\alpha\beta}\vec e_\beta.
\label{eq:b2}
\end{equation}
The $t_{\alpha\beta}$ are then uniquely determined, if we view them as homomorphisms from 
$\mathbb{Z}_{p^{M_\beta}}$ to $\mathbb{Z}_{p^{M_\alpha}}$.  Since such homomorphisms can always 
be expressed through the multiplication by some appropriate number, the expression 
given in (\ref{eq:b2}) is meaningful. 

Now let us specify more precisely the range of variation of the $t_{\alpha\beta}$. We distinguish two cases
\begin{enumerate}
\item $M_\alpha\leq M_\beta$: in this case any number modulo $p^{M_\alpha}$ will do, and two different
such numbers provide different homomorphisms.

\item $M_\alpha>M_\beta$: in this case, the number needs to be a multiple of $p^{M_\alpha-M_\beta}$, since 
otherwise it is not possible to define the map. In that case, we may describe $t_{\alpha\beta}$ as
$p^{M_\alpha-M_\beta}\tau_{\alpha\beta}$, where $\tau_{\alpha\beta}$ is an arbitrary number modulo
$p^{M_\beta}$
\end{enumerate}

 Consider the matrix $T$ in greater detail, and just as in the main text,
let us denote by $\tilde{M}_1, \cdots, \tilde{M}_q$ the
{\em distinct\/} values of $M_\alpha$
in {\em strictly decreasing order}. 
We define $\nu_{\alpha}$ to be the number of
times $\tilde{M}_\alpha$ 
appears repeated in the original series. 
This defines a division of the $T$ matrix in {\em blocks\/} 
of size $\nu_\alpha\times\nu_\beta$, where $1\leq\alpha,\beta\leq q$.

We first take the elements $t_{\alpha\beta}$ modulo $p$. As a consequence of the observation (2), 
all blocks with $\alpha<\beta$ are filled with zeros, whereas all other blocks have arbitrary entries.
It thus follows that the matrix is invertible modulo $p$ if and only if all the diagonal blocks are invertible. 
The number of invertible $\nu_\alpha\times\nu_\alpha$ matrices modulo $p$ is given by
\begin{equation}
I_\alpha=\prod_{\beta=1}^{\nu_\alpha}\left(
p^{\nu_\alpha}-p^{\beta-1}
\right).
\label{eq:b3}
\end{equation}
One sees this by observing that we may first choose an arbitrary non-zero vector of length 
$\nu_\alpha$ in $p^{\nu_\alpha}-1$ different ways, then chose a second vector independent 
from the first, and so on. 

All the other entries in the blocks below the diagonal, that is, the $t_{\alpha\beta}$
with $\alpha>\beta$, can be chosen arbitrarily.
If we thus define
\begin{equation}
K_0=\sum_{1\leq\beta<\alpha\leq q}\nu_\alpha\nu_\beta,
\label{eq:b4}
\end{equation}
then the total number of possible forms of the matrix $T$ modulo $p$ is
\begin{equation}
N(p)=p^{K_0}\prod_{\alpha=1}^qI_\alpha.
\label{eq:b5}
\end{equation}

We now need to work out the number of ways this can be extended to the full matrix, where the entries 
have the full range of variation specified above. Note first that the condition of invertibility carries over 
automatically upon extension, as the inverse matrix of $T$ modulo $p$ can be extended uniquely to the
inverse of the extended matrix. 

To the entries on or below the diagonal, that is, with
$t_{\alpha\beta}$ such that $\alpha\geq\beta$, we can add any number of the form 
$p\tau_{\alpha\beta}$, where $\tau_{\alpha\beta}$
is an arbitrary number taken modulo $p^{\tilde{M}_\alpha-1}$. So the number of possibilities of extending these
blocks is given by $p^{K_1}$, where
\begin{equation}
K_1=\sum_{1\leq\beta\leq\alpha\leq q}(\tilde{M}_\alpha-1)\nu_\alpha\nu_\beta.
\label{eq:b5.1}
\end{equation}
For the blocks above the diagonal, that is, the blocks with $t_{\alpha\beta}$ such that $\alpha<\beta$, 
they are of the form $p^{\tilde{M}_\alpha-\tilde{M}_\beta}\tau_{\alpha\beta}$, with $\tau_{\alpha\beta}$ a number 
modulo $p^{\tilde{M}_\beta}$, so that the total number of ways of extending the blocks above the diagonal
is $p^{K_2}$, with 
\begin{equation}
K_2=\sum_{1\leq\alpha<\beta\leq q}\tilde{M_\beta}\nu_\alpha\nu_\beta.
\label{eq:b6}
\end{equation}
The final result for the total number of automorphisms is thus given by
\begin{equation}
N_{tot}(M_1,\ldots,M_r)=p^{K_0+K_1+K_2}\prod_{\alpha=1}^qI_\alpha.
\label{eq:b7}
\end{equation}

\section{Homomorphisms from $\mcH$ to the cyclic group $\mathbb{Z}_d$}
\label{app:c}

Here we describe the set of homomorphisms $\phi$
from an abelian group of the form
$\mcH = \bigoplus_{\alpha=1}^r \mathbb{Z}_{p^{M_{\alpha}}}$ to the cyclic group $\mathbb{Z}_{p^{M_1}}$.
As always, the numbers $M_{\alpha}$ are ordered in decreasing order.

We may as always choose a basis $\vec e_\alpha$ of $\mcH$, each having order
$p^{M_\alpha}$. The homomorphism $\phi$ is then uniquely determined by a set of homomorphisms
$\phi_\alpha$ from the cyclic groups $\mathbb{Z}_{p^{M_\alpha}}$ to $\mathbb{Z}_{p^{M_1}}$. 

Whenever $M_1 = M_\alpha$, $\phi_\alpha$ simply reduces to multiplication by an arbitrary 
$r_\alpha$ number modulo 
$p^{M_1}$. On the other hand, if $M_\alpha< M_1$, then $\phi_\alpha$ is given by the multiplication by 
a number of the form $p^{M_1-M_\alpha}r_\alpha$ where $r_\alpha$ is an arbitrary number modulo
$p^{M_\alpha}$. 

If we therefore define $\nu$ as the number of $M_\alpha = M_1$, so that $M_\nu\geq M_1$ but 
$M_{\nu+1}< M_1$, and $\nu=0$ if $M_\alpha<M_1$ for all $\alpha$, then $\phi$ can be expressed 
as follows: 
\begin{subequations}
\begin{eqnarray}
\phi\left(\sum_{\alpha}c_\alpha\vec e_\alpha
\right)&=&\vec\phi\cdot\vec c:=\sum_\alpha\phi_\alpha c_\alpha
\label{eq:c1a}\\
\phi_\alpha&=&\begin{cases*}
p^{M_1-M_\alpha}s_\alpha&$(\alpha\geq\nu)$\\
t_\alpha&$(\alpha<\nu)$
\end{cases*}
\label{eq:c1b}
\end{eqnarray}
\label{eq:c1}
\end{subequations}
where $s_\alpha$ and $t_\alpha$ are numbers modulo $p^{M_\alpha}$ and $p^{M_1}$ respectively. 

The total number of such homomorphisms is therefore given
by $p^K$ with
\begin{equation}
K=\sum_{\alpha=1}^\nu M_\alpha+M_1(r-\nu)
\label{eq:c2}
\end{equation}

\section{Examples}
\label{app:examples}
To illustrate the application of the mathematical tools presented in the main text, 
we will provide detailed examples of how to identify 
the Weyl channels as is described
in section \ref{sec:border}.
Remember that said channels are characterized by a subgroup $\mathcal{H}$
of the group of indices $\mathcal{G}$ (which corresponds to the 
indices whose $\tau$'s have norm $1$)
and an homomorphism (which gives the phases to each $\tau$).
The examples will be arranged in increasing generality, starting with a system
of one qudit with prime dimension and ending with the most general
case of many qudits of arbitrary dimensions.

\subsection{Single particle with prime dimension}
\label{subsec:single-prime-particle}
Here we show how to follow the algorithm described in section \ref{sec:border} 
for the case of one qudit with prime dimension $d=p$. 
In this case, the group of indices for the $\tau$'s is simply $\mathbb{Z}_p \oplus \mathbb{Z}_p$
and we search for all its subgroups and then all homomorphisms to $\mathbb{Z}_p$.
While we do this in general, we simultaneously show 
the specific case for $p=2$ (a single qubit).\\

We start determining the types of subgroups of $\mathbb{Z}_p \oplus \mathbb{Z}_p$ 
and a representative subgroup of each type. 
For that, we take the following steps:
\begin{itemize} 
\item We select a basis for $\mathbb{Z}_p \oplus \mathbb{Z}_p$, the simplest would be  $\{\vec{e}_1, \vec{e}_2\} := \{(1,0),(0,1)\}$.
\item Define $M_{\alpha}$ as the number such that $p^{M_{\alpha}}$ 
is the order of $\vec{e}_{\alpha}$. In this case, $M_1 = M_2 = 1$ 
and therefore the partition for the group is $\bar{M} = M_1M_2 = 11$.
\item Find all the sets $\mathbb{S} = \{s_\alpha\}$ with 
$0 \leq s_{\alpha} \leq M_{\alpha}$.
In this case, there are four said sets: $
\{0,0\}, \{0,1\}, \{1,0\}, \{1,1\}$ 
\item For each set, we define the basis
 $\mathcal{B} = \{p^{s_1} \vec{e}_1, p^{s_2} \vec{e}_2\}$, and therefore get the following bases:
\begin{widetext}
\begin{align}
\label{eq:bases-qubit}
\mathbb{S} = \{0,0\} \rightarrow \mathcal{B}=\{p^0 \vec{e}_1 , p^0 \vec{e}_2\} = \{\vec{e}_1, \vec{e}_2\}\;,\; \mathbb{S} = \{0,1\} \rightarrow \mathcal{B}= \{p^0 \vec{e}_1 , p^1 \vec{e}_2\} = \{\vec{e}_1\}, \\
\mathbb{S} = \{1,0\} \rightarrow \mathcal{B}=\{p^1 \vec{e}_1 , p^0 \vec{e}_2\} = \{\vec{e}_2\}\;,\;
\mathbb{S} = \{0,0\} \rightarrow \mathcal{B}=\{p^0 \vec{e}_1 , p^0 \vec{e}_2\} = \{ \}.
\end{align}
\end{widetext}
Notice that $p \vec{e}_{\alpha} = (0,0)$,
which doesn't contribute to the basis.

\item Now we only keep bases that are not T-isomorphic, so as to avoid unnecessary redundancies when applying automorphisms in the next step.

As mentioned in the main text, we do so by first defining the sequence of numbers
$\tilde{M}_1, \cdots, \tilde{M}_q$
given by the $q$ different values in
the sequence of numbers in $\bar{M}$.
In this case, as $\bar{M} = 11$, there is only one number in said sequence, being $\tilde{M}_1 = 1$. 
Then, we define the subsets $S_j = \{s_{\alpha}, \forall \alpha : M_{\alpha} = \tilde{M}_j\}$. 
In this case, we only have one such subset, which happens to be the whole set, $S_1 = \{s_1, s_2\}$.
As described in the main text, two different 
bases of the ones described in 
the previous step 
are T-isomorphic if all their sets $S_j$ are the same.
In this case, this means that the bases constructed 
from $\mathbb{S} = \{1,0\}$ and $\mathbb{S} = \{0,1\}$
are T-isomorphic.
Therefore, we may only keep one of those bases, let's say we keep 
$\{0,1\}$ and discard the other one,
so that the representative bases are:

\begin{equation}
\label{eq:bases-qubit2}
\{\vec{e}_1, \vec{e}_2\}, \;\{\vec{e}_1\},\; \{ \}.
\end{equation}

\end{itemize}

The subgroups generated by these bases are the
``representative subgroups''. 
Then, to find all possible subgroups,
we need to find all automorphisms of $\mathbb{Z}_p \oplus \mathbb{Z}_p$ 
and apply them
to these representative groups, so that we can obtain all subgroups of each type 
starting from the representatives. \\

As shown in eq.(\ref{eq:b2}), automorphisms of
$\mathbb{Z}_p \oplus \mathbb{Z}_p$ 
are determined by a matrix $t_{\alpha \beta}$ with dimensions $r \times r$,
where $r$ is the number of elements in the basis of the group.
Therefore, in this case the automorphisms are characterized by $2\times 2$ matrices,
where each entry $t_{\alpha \beta}$ can be a number modulo $p$.

Furthermore, these entries are constrained
by the conditions given in Appendix \ref{app:b},
and since in this case $M_1 = M_2$, all $t_{\alpha\beta}$
fall into case 1 of the aforementioned conditions. 
This implies that all $t_{\alpha \beta}$ are numbers modulo $p^{M_{\alpha}} = p$.
This gives a total of $p^4$ possible matrices, but 
we need to only keep those that are invertible.
For example, for the especial case of one qubit, we construct all $2 \times 2$
matrices such that all entries $t_{\alpha \beta}$
are numbers modulo $2$,
and out of these $16$ matrices, only $6$ of them are invertible:
\begin{align}
T_1 &= \begin{pmatrix}
0 & 1 \\
1 & 1
\end{pmatrix}, \quad
T_2 &= \begin{pmatrix}
1 & 0 \\
1 & 1
\end{pmatrix}, \quad
T_3 &= \begin{pmatrix}
1 & 1 \\
0 & 1
\end{pmatrix}, \\
T_4 &= \begin{pmatrix}
1 & 1 \\
1 & 0 
\end{pmatrix}, \quad
T_5 &= \begin{pmatrix}
0 & 1 \\
1 & 0 
\end{pmatrix}, \quad
T_6 &= \begin{pmatrix}
1 & 0\\
0 & 1
\end{pmatrix}.
\end{align}
Therefore, these matrices represent the $6$ possible
automorphisms of $\mathbb{Z}_2 \oplus \mathbb{Z}_2$.\\

To find all the subgroups, we simply apply all automorphisms
on each representative subgroup found in eq.(\ref{eq:bases-qubit2}).
For the case of one qubit, the result would be as follows:
\begin{itemize}
\item $\{\vec{e}_1, \vec{e}_2\}$: 
The group generated by this basis is the whole gruop,
and when we apply any automorphism,
we always get back the whole group
because automorphisms are invertible.
Therefore, the only group here is the whole group $\mathbb{Z}_p \oplus \mathbb{Z}_p$.

\item $\{\vec{e}_1\}$: For this basis,
the representative subgroup it generates is $\{(0,0),(1,0)\}$.
As before, we apply all the automorphisms to this subgroup.
We can see an example for the case of one qubit, when applying $T_1$ to the element$(1,0)$, we get as a result $(0,1)$, 
so applying $T_1$ to the subgroup gives as a result $\{(0,0), (0,1)\}$. 
Similarly, when using the other automorphisms in the case of one qubit, we get the following subgroups (excluding repetitions):
\begin{equation}
\label{eq: subgroups qubit 10}
\{(0,0), (1,0)\}, \quad \{(0,0), (0,1)\}, \quad \{(0,0), (1,1)\}.
\end{equation} 
\item $\{\}$: In this case,
the representative subgroup is the trivial $\{(0,0)\}$ and applying any automorphism leaves this subgroup intact.
\end{itemize} 

Therefore, we have found all subgroups of $\mathbb{Z}_p \oplus \mathbb{Z}_p$.
For the case of $p=2$,
they are: the complete group,
the subgroups obtained in eq. (\ref{eq: subgroups qubit 10})
and the trivial subgroup $\{(0,0)\}$, see Fig~\ref{fig:z2z2:subgroups}.
That is, up until this point we have found the sets of indices 
$\{m,n\}$ for which $\tau(m,n)$ can have norm $1$ in a Weyl channel.

\begin{figure}[h!]
\includegraphics[width=\columnwidth]{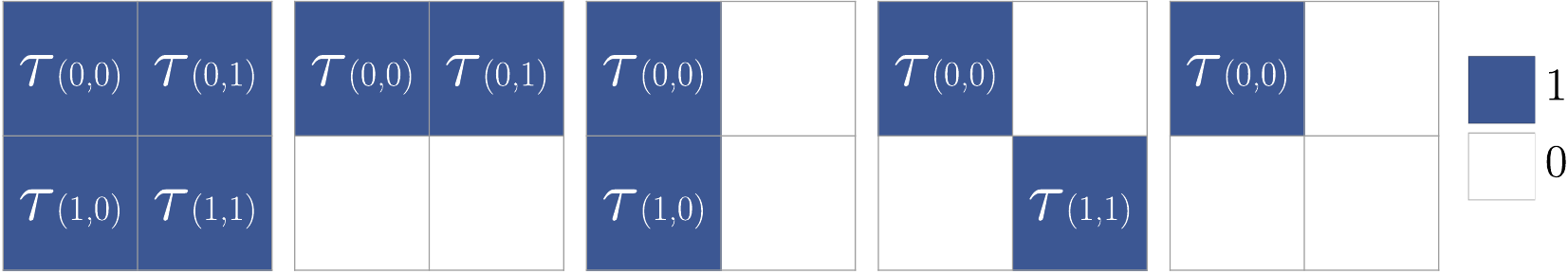} 
\caption{
Single-qubit Weyl erasing channels with $\tau(m,n)=0,1$. These are completely 
characterized by the sets $\{ (m,n) : \tau(m,n)=1 \}$, which 
are the subgroups of $\bbZ_2 \oplus \bbZ_2$.}
\label{fig:z2z2:subgroups}
\end{figure}

Now we find all homomorphisms $\phi: \mathbb{Z}_p \oplus \mathbb{Z}_p \rightarrow \mathbb{Z}_p$. 
A homomorphism is characterized by its value in each element in the basis,
$\phi_{\alpha} :=\phi(\vec{e}_{\alpha})$.
To know the possible values of $\phi_{\alpha}$, we first define $\nu$ like in Appendix \ref{app:c},
as the number such that $M_{\nu} \geq n$ but $M_{\nu+1} < n$ 
(where $n$ is the exponent of the co-domain of $\phi$, in this case the co-domain is $\mathbb{Z}_p$, so that $n=1$
and we can see that $\nu = 2$). 
The possible values of $\phi_{\alpha}$ are given by the cases in eq. (\ref{eq:c1b}):
\begin{itemize}
\item $\phi_1$: Since $\alpha=1 < \nu = 2$, we have the second case and therefore 
$\phi_1$ is a number modulo $p^{n} = p$.
\item $\phi_2$: Since $\alpha = 2 \geq \nu =2$, we have the first case, so that $\phi_2 = p^{n-M_2} s_{s} = s_{2}$ with $s_{2}$ a number modulo $p^{M_2} = p$. Therefore, $\phi_2$ is also a number modulo $p$. 
\end{itemize}
To find all possible homomorphisms, we determine all possible pairs $\phi_1,\phi_2$. 
Since $\phi_1,\phi_2$ can be any number modulo $p$, we
have a total of $p^2$ homomorphisms. In the special case of 
one qubit, they are: $\phi_1 = \phi_2 = 0;\; \phi_1 = 0 ,\phi_2 = 1;\;
\phi_1 = 1 ,\phi_2 = 0;\;$ and 
$\phi_1 = \phi_2 = 1$. We show in Fig. \ref{fig:qubit:weyl:erasing} all Weyl 
erasing channels of a single qubit.

\begin{figure}[h!]
\includegraphics[width=.8\columnwidth]{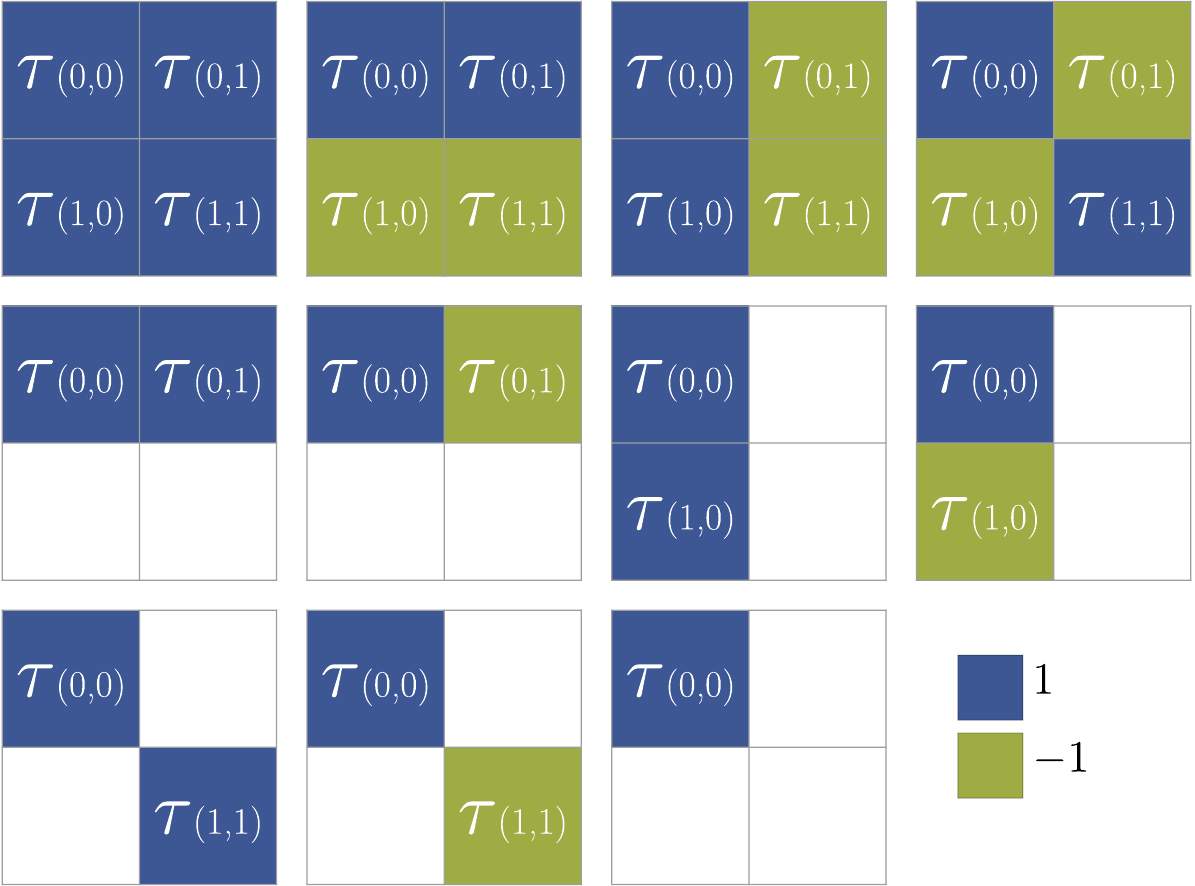} 
\caption{
Single-qubit Weyl erasing channels with $\abs{\tau(m,n)}=0,1$. These are completely 
characterized by two elements: (i) the sets $\{ (m,n) : \abs{\tau(m,n)} = 1 \}$, which 
are the subgroups of $\bbZ_2 \oplus \bbZ_2$, and (ii) all homomorphisms 
$\phi : \bbZ_2 \oplus \bbZ_2 \mapsto \bbZ_2$.
}
\label{fig:qubit:weyl:erasing}
\end{figure}

\subsection{Single particle with $d = p^n$}

Now we generalize to the case of a single particle with $d=p^n$. To have a concrete example
to show, we consider a particle with $d= 2^2 = 4$.\\

\begin{itemize}
\item We select a basis of $\mathbb{Z}_{p^n} \oplus Z_{p^n}$, for example, $\{\vec{e}_1, \vec{e}_2\} = \{(1,0),(0,1)\}$.
\item Define $M_{\alpha}$ as the number such that $p^{M_{\alpha}}$ is the order of $\vec{e}_{\alpha}$. 
In this case, $M_1 = M_2 = n$, so the partition of 
the group is $\bar{M} = M_1M_2 = nn$. 
For the special case of a 4-level system, the partition is $\bar{M} =22$.
\item Find all the sets $\mathbb{S} = \{s_{\alpha}\}$ with $0 \leq s_{\alpha} \leq M_{\alpha}$. 
For the case of a 4-level system, there are nine said sets: $\{0,0\}$, $\{0,1\}$, $\cdots$, 
$\{2,1\},$ $\{2,2\}$.
\item For each set, we define a basis
$\mathcal{B} = \{p^{s_1}\vec{e}_1, p^{s_2} \vec{e}_2\}$. 
For example, for a 4-level system, the bases are:
\begin{widetext}
\begin{eqnarray}
& \mathbb{S} = \{0,0\} \rightarrow \mathcal{B}= \{\vec{e}_1, \vec{e}_2\}, \;
\mathbb{S} = \{0,1\}\rightarrow \mathcal{B}=\{\vec{e}_1, 2 \vec{e}_2\}, \;
\mathbb{S} = \{0,2\}\rightarrow \mathcal{B}= \{\vec{e}_1\}, \\
& \mathbb{S} = \{1,0\} \rightarrow \mathcal{B}= \{2\vec{e}_1, \vec{e}_2\}, \;
\mathbb{S} = \{1,1\}\rightarrow \mathcal{B}=\{2 \vec{e}_1, 2\vec{e}_2\}, \;
\mathbb{S} = \{1,2\}\rightarrow \mathcal{B}= \{2\vec{e}_1\},  \\
& \mathbb{S} = \{2,0\} \rightarrow \mathcal{B}= \{\vec{e}_2\}, \;
\mathbb{S} = \{2,1\}\rightarrow \mathcal{B}=\{2 \vec{e}_2\}, \;
\mathbb{S} = \{2,2\}\rightarrow \mathcal{B}= \{\}. 
\end{eqnarray}
\end{widetext}
\item We define the sequence of numbers $\tilde{M}_1, \cdots, \tilde{M}_q$
given by the $q$ different values in the
sequence $\bar{M}$. 
In this case, we only have $\tilde{M}_1 = n$.
Then, we define the subsets $S_j = \{s_{\alpha}, \forall \alpha: M_{\alpha} = \tilde{M}_j\}$;
in this case we only have $S_1 = \{s_1, s_2\}$.
As said before, different bases are T-isomorphic if all sets $S_j$ are the same,
and we only need to keep one of them. 
Therefore, for the case of one 4-level system, we only have to keep the following bases,
which generate the representative subgroups:
\begin{align*}
\{\vec{e}_1, \vec{e}_2\} ,\; \{\vec{e}_1, 2 \vec{e}_2\} ,\; \{\vec{e}_1\} ,\; \{2 \vec{e}_1, 2 \vec{e}_2\}, \; \{2 \vec{e}_1 \},\; \{\}.\\
\end{align*}
\end{itemize}

Once again, automorphisms are characterized by $2 \times 2$ matrices 
$t_{\alpha \beta}$.
Since $M_1 = M_2 =n$, all $t_{\alpha \beta}$ fall
into the first case of Appendix \ref{app:b},
which implies that all $t_{\alpha \beta}$ are numbers modulo $p^{M_{\alpha}} = p^n$.
This gives a total of $p^{4n}$ possible matrices,
of which we only keep those that are invertible 
(have non-zero determinant modulo $p$).
For example, in the case of a 4-level system, there are $96$ such matrices.\\

\textbf{Find all subgroups:} As before, to find all subgroups of 
$\mathbb{Z}_d \oplus \mathbb{Z}_d$, we apply all
automorphisms to each of the representative subgroups found
in the first step and omit duplicates.
As always, these subgroups describe the indexes $\tau(m,n)$ which can 
have norm $1$. We show in Fig. \ref{fig:q4bit:weyl:erasing} some Weyl erasing
channels of a 4-level system that are completely characterized by subgroups 
of $\bbZ_4 \oplus \bbZ_4$.

\begin{figure}
\centering
\includegraphics[width=\columnwidth]{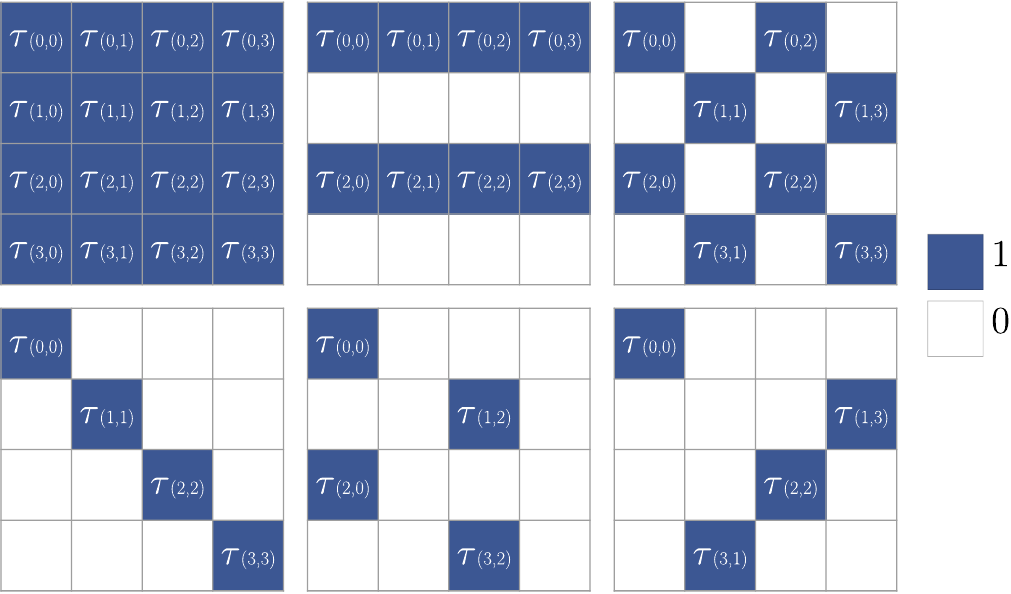} 
\caption{
Some 4-level system Weyl erasing channels 
with $\tau(m,n)=0,1$. Each of those is completely 
characterized by a set $\{ (m,n) : \tau(m,n)=1 \}$, which 
is a subgroup of $\bbZ_4 \oplus \bbZ_4$.
}
\label{fig:q4bit:weyl:erasing}
\end{figure}

We find all homomorphisms $\phi: \mathbb{Z}_{p^n} \oplus \mathbb{Z}_{p^n} \rightarrow \mathbb{Z}_{p^n}$.
As for the last case, the homomorphism is characterized by two values
$\phi_1 = \phi(\vec{e}_1), \phi_2 = \phi(\vec{e}_2)$.
Using Appendix \ref{app:c}, 
we find that $\phi_1$ and $\phi_2$ are both numbers modulo $p^{n}$.
To find all possible homomorphisms, we determine all possible pairs of $\phi_1,\phi_2$ 
which gives a total total of $p^{2n}$ homomorphisms. 
For the case of a 4-level system, the $16$ homomorphisms are
given by all pairs of numbers $\phi_1,\phi_2$  modulo $4$.
We show in Fig. \ref{fig:q4bit:weyl:erasing:general} some Weyl erasing 
channels for a 4-level system.

\begin{figure}
\includegraphics[width=\columnwidth]{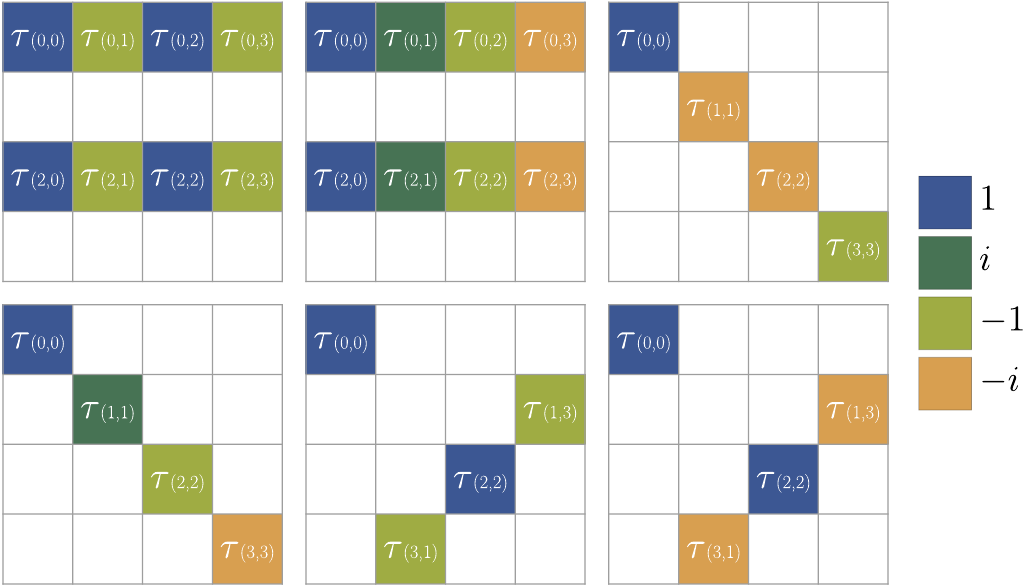} 
\caption{
Some Weyl erasing channels, with $d=4$, $N=1$, and $\abs{\tau(m,n)}=0,1$. 
Each of those is completely 
characterized by (1) a set $\{ (m,n) : \abs{\tau(m,n)}=1 \}$, which 
is a subgroup of $\bbZ_4 \oplus \bbZ_4$, and (2) an homomorphism 
$\phi : \bbZ_4 \oplus \bbZ_4 \mapsto \bbZ_4$.
}
\label{fig:q4bit:weyl:erasing:general}
\end{figure}

\subsection{Single particle with arbitrary dimension}
\label{sec: Single particle with arbitrary dimension}

Now we consider a single particle with arbitrary dimension $d$, 
which can be written with its prime factorization as $d=\prod_{i=1}^K p_i^{n_i}$.
In this case, what we have done in the last examples does not apply,
since it only applies for groups of the form  $\mathbb{Z}_{p^{M_1}} \oplus \mathbb{Z}_{p^{M_2}} \oplus \cdots \oplus \mathbb{Z}_{p^{M_K}}$(notice that all the groups in the sum are powers of the same prime). 

However, we can still find the subgroups of $\mathbb{Z}_d \oplus \mathbb{Z}_d$. 
To do it, we use the fact that $\mathbb{Z}_{pq} \simeq \mathbb{Z}_p \oplus \mathbb{Z}_q$ 
whenever $p$ and $q$ are coprime. 
Therefore, $\mathbb{Z}_d \simeq \mathbb{Z}_{p_1^{n_1}} \oplus \cdots \oplus \mathbb{Z}_{p_K^{n_K}}$,
and after reordering we have that:
\begin{equation}
\label{eq: decomposing d}
\mathbb{Z}_d \oplus \mathbb{Z}_d \simeq \bigoplus_{i=1}^K \mathbb{Z}_{p_i^{n_i}} \oplus \mathbb{Z}_{p_i^{n_i}}.
\end{equation}
Furthermore, it is a well known fact that subgroups of 
$F_1 \oplus F_2$ with $F_1$ and $F_2$ groups of coprime orders,
are obtained as cartesian products of subgroups of $F_1$ with subgroups of $F_2$. 
Therefore, because of the decomposition of eq.(\ref{eq: decomposing d}),
we can find the subgroups of $\mathbb{Z}_d \oplus \mathbb{Z}_d$
by obtaining all the subgroups of each $\mathbb{Z}_{p_i^{n_i}} \oplus \mathbb{Z}_{p_i^{n_i}}$ 
(which can be done as in the last example)
and then taking all their possible cartesian products. 

To find the homomorphisms $\phi: \mathbb{Z}_d \oplus \mathbb{Z}_d \rightarrow  \mathbb{Z}_d$,
we picture the $\phi$ as going from
$\bigoplus_{i=1}^K \mathbb{Z}_{p_i^{n_i}} \oplus \mathbb{Z}_{p_i^{n_i}}$
to $ \bigoplus_{i=1}^K \mathbb{Z}_{p_i^{n_i}}$. 
Any such homomorphism can be written as the
direct sums of homomorphisms $\phi_i: \mathbb{Z}_{p_i^{n_i}} \oplus \mathbb{Z}_{p_i^{n_i}} \rightarrow \mathbb{Z}_{p_i^{n_i}}$, 
which we obtained in the last example. 
Therefore, by constructing all such direct sums, we obtain all the homomorphisms we were looking for.

For example, if $d=12$ all we need to do is find
all the subgroups and homomorphisms of $\mathbb{Z}_4 \oplus  \mathbb{Z}_4$
and of $\mathbb{Z}_3 \oplus \mathbb{Z}_3$
and then take cartesian products of these subgroups and
the direct sum of the homomorphisms.

\subsection{N particles of dimension of prime power dimension}
\label{sec:N particles}
Now we consider a system consisting of $N$ particles,
each with dimension $p^{n_i}$ for $i =1, \cdots, N$,
ordered such that $n_1 \geq n_2 \geq \cdots \geq n_N$
(notice that the prime $p$ is the same for all particles).
In this case, the problem is to find all the subgroups of 
$\mathcal{G} = \mathbb{Z}_{p^{n_1}} \oplus \mathbb{Z}_{p^{n_1}} \oplus
\cdots \mathbb{Z}_{p^{n_N}} \oplus \mathbb{Z}_{p^{n_N}}$ 
and homomorphisms from $\mathcal{G}$ to $\mathbb{Z}_{p^{n_1}}$.
As an example, we will develop a system of one qubit and one 4-level system. \\

Similarly to the other examples, to find the representative subgroups we
take the following steps:
\begin{itemize}
\item Select a basis of $\mathcal{G}$. 
For example, in the case of a qubit and a 4-level system, the group is 
$\mathcal{G}= \mathbb{Z}_4 \oplus \mathbb{Z}_4 \oplus \mathbb{Z}_2 \oplus \mathbb{Z}_2$,
and we can choose a basis $\{\vec{e}_1, \vec{e}_2, \vec{e}_3, \vec{e}_4\}$, with
$
\vec{e}_1 = (1,0,0,0) \quad \vec{e}_2 = (0,1,0,0) \quad \vec{e}_3 = (0,0,1,0) \quad \vec{e}_4 = (0,0,0,1),$ 
where the first two entries add mod $4$ and the last two add mod $2$.
\item Next, we find the partition of $\mathcal{G}$. For the qubit and 4-level system, 
the orders of $\vec{e}_1$ and $\vec{e}_2$ are $4$ and the orders
of $\vec{e}_3, \vec{e}_4$ are $2$, so that
the partition of the group is $\bar{M} = M_1M_2M_3M_4 = 2211$.
\item We find all the sets $\mathbb{S} = \{s_{\alpha}\}$ with $0 \leq s_{\alpha} \leq M_{\alpha}$,
in this case there are $36$ said sets.
\item For each set $\mathbb{S}$, we define the basis $\mathbb{B} = \{p^{s_1} \vec{e}_1, p^{s_2} \vec{e}_2, p^{s_3} \vec{e}_3, p^{s_4} \vec{e}_4\}$.

\item As before, some of the bases created this way are redundant, 
since they are T-isomorphic. To eliminate this redundancy,
we first define $\tilde{M}_1, \cdots, \tilde{M}_q$ given 
by the $q$ different values of numbers in $\bar{M}$.
In the example of
a 4-level system and a qubit, we have that $\tilde{M}_1 = 2, \tilde{M}_2 = 1$.
Then, we define the sets $S_j = \{s_{\alpha}, \forall \alpha: M_{\alpha} = \tilde{M}_j\}$, 
which in this case are $S_1 = \{s_1, s_2\}$ and $S_2 = \{s_3, s_4\}$.
Finally, bases are $T-$isomorphic if their corresponding sets $S_j$ are equal. For example, the bases that come from the sets $\mathbb{S} = \{2,1,1,0\}$ and $\mathbb{S'} = \{1,2,0,1\}$
are $T-$isomorphic, since $S_1 = S_1' = \{2,1\}$ and $S_2 = S_2' = \{1,0\}$.
Therefore, after eliminating redundant bases and keeping only one of each
batch, 
we get the following $18$ bases:

\begin{widetext}
\begin{align*}
\mathbb{S} = \{0,0,0,0\} \rightarrow \mathcal{B}= \{\vec{e}_1, \vec{e}_2, \vec{e}_3, \vec{e}_4\}\; , \; \mathbb{S} = \{0,0,0,1\} \rightarrow \mathcal{B}=  \{\vec{e}_1, \vec{e}_2, \vec{e}_3\}\;, \; \mathbb{S} = \{0,0,1,1\}\rightarrow \mathcal{B}=  \{\vec{e}_1, \vec{e}_2\}, \\
\mathbb{S} = \{0,1,0,0\} \rightarrow \mathcal{B}=  \{\vec{e}_1, 2\vec{e}_2, \vec{e}_3, \vec{e}_4\}\;, \; \mathbb{S} = \{0,1,0,1\}\rightarrow \mathcal{B}=  \{\vec{e}_1, 2 \vec{e}_2, \vec{e}_3\}\;, \; \mathbb{S} = \{0,1,1,1\}\rightarrow \mathcal{B}=  \{\vec{e}_1, 2 \vec{e}_2\},\\
\mathbb{S} = \{0,2,0,0\}\rightarrow \mathcal{B}=  \{\vec{e}_1, \vec{e}_3, \vec{e}_4\}\;,\;
\mathbb{S} = \{0,2,0,1\} \rightarrow \mathcal{B}=  \{\vec{e}_1, \vec{e}_4\}\;, \; \mathbb{S} = \{0,2,1,1\}\rightarrow \mathcal{B}=  \{\vec{e}_1\},\\
\mathbb{S} = \{1,1,0,0\} \rightarrow \mathcal{B}=  \{2\vec{e}_1, 2 \vec{e}_2, \vec{e}_3, \vec{e}_4 \}\;, \; \mathbb{S} = \{1,1,0,1\}\rightarrow \mathcal{B}=  \{2 \vec{e}_1, 2 \vec{e}_2, \vec{e}_3\}\;, \; \mathbb{S} = \{1,1,1,1\}\rightarrow \mathcal{B}=  \{2 \vec{e}_1, 2 \vec{e}_2\},\\
\mathbb{S} = \{2,1,0,0\}\rightarrow \mathcal{B}=  \{2 \vec{e}_2, \vec{e}_3, \vec{e}_4\}\;, \; \mathbb{S} = \{2,1,0,1\}\rightarrow \mathcal{B}=  \{2 \vec{e}_2, \vec{e}_3\}\;, \; \mathbb{S} = \{2,1,1,1\}: \{2 \vec{e}_2\},\\
\mathbb{S} = \{2,2,0,0\}\rightarrow \mathcal{B}=  \{\vec{e}_3, \vec{e}_4\}\;, \; \mathbb{S} = \{2,2,0,1\}\rightarrow \mathcal{B}=  \{\vec{e}_3\}\;, \; \mathbb{S} = \{2,2,0,0\}\rightarrow \mathcal{B}=  \{\}
\end{align*}
\end{widetext}
As in the other cases, these bases form the 
representative subgroups of the group.

\end{itemize}
As before, the automorphisms are described by matrices $t_{\alpha \beta}$. 
For the special case of a qubit and 4-level system, the matrices
are of dimensions $4\times 4$ (because there are $4$ elements in the basis) 
and the conditions on the entries $t_{\alpha \beta}$ 
can be found using the cases described in Appendix \ref{app:b},  which lead to:
\begin{itemize}
\item $t_{11}$: $M_1 = M_1$ so that $t_{11}$ is a number modulo $p^{M_1} = 2^2 = 4$.
\item $t_{12}$: $M_1 = M_2$ so that $t_{12}$ is a number modulo $p^{M_1} = 2^2 = 4$.
\item $t_{13}$: $M_1 >  M_3$ so that $t_{13} = p^{M_{1}-M_{3}} \tau_{1 3} = 2 \tau_{13} $  with $\tau_{13}$ a number modulo $p^{M_3} = 2$. Therefore, the possible values are $0$ and $2$.
\item  The same can be done for the rest of the values, and we find that $t_{11},t_{12}, t_{21},t_{22}  \in \{0,1,2,3\}$; $t_{13}, t_{14}, t_{23}, t_{24} \in \{0,2\}$ and $t_{31}, t_{32}, t_{33}, t_{34}, t_{41}, t_{42}, t_{43}, t_{44} \in \{0,1\}$.
\end{itemize}
Then, running through all possible matrices with these entries and keeping only the invertible ones, we find $147456$ matrices. \\

As before, to find all subgroups, we apply these automorphisms to every representative subgroup and
discard repetitions. 
This procedure gives us the $249$ subgroups of $\mathbb{Z}_4 \oplus \mathbb{Z}_4 \oplus \mathbb{Z}_2 \oplus \mathbb{Z}_2$, some of which are
shown in fig. (\ref{fig:WCEPs}).

\begin{figure}
\centering
\includegraphics[width=\columnwidth]{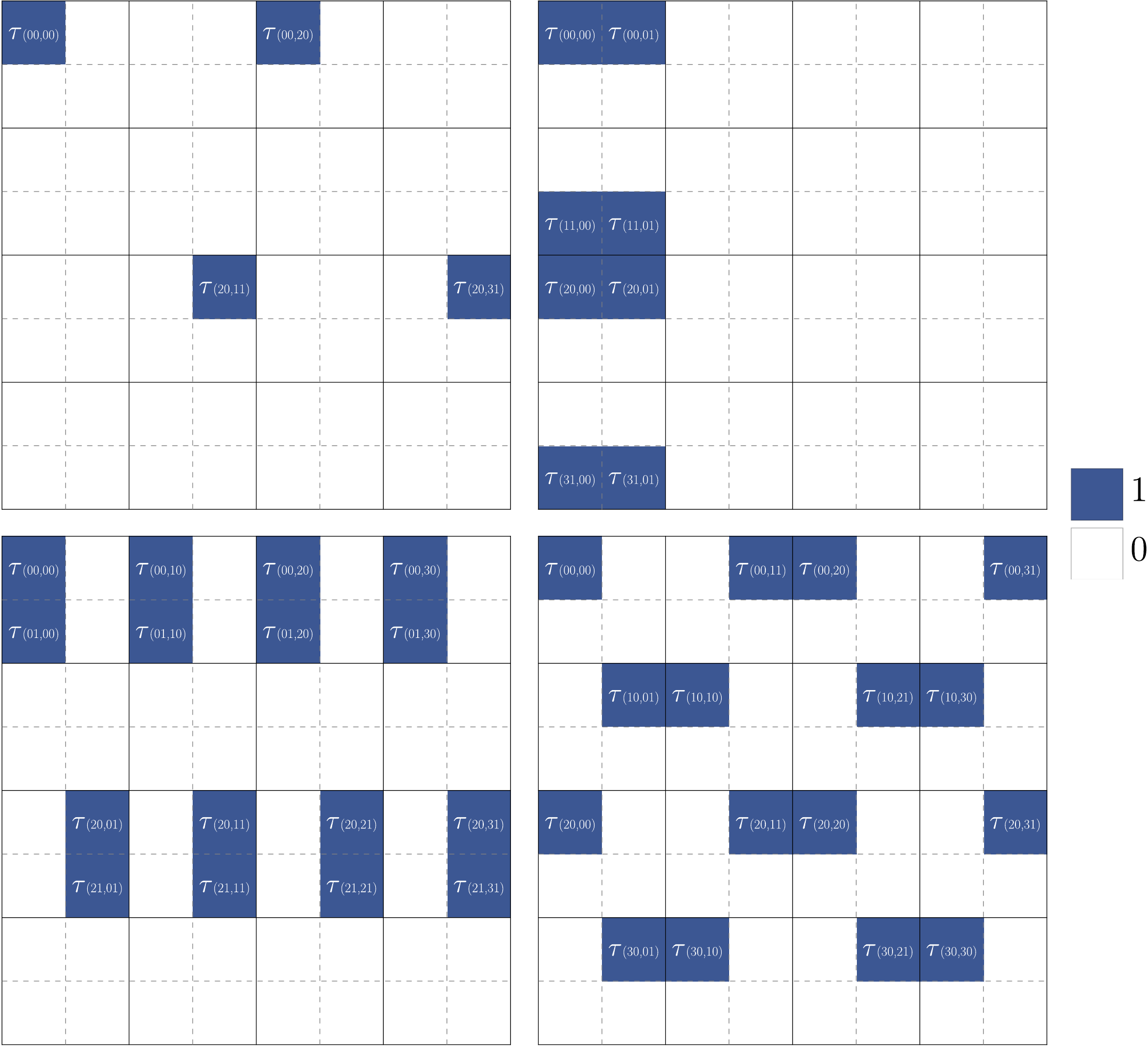}
\caption{
Some Weyl erasing channels of a system with a 4-level particle and a qubit,
with $\tau(\vec m, \vec n)=0,1$. 
Recall that $(m_\alpha, n_\alpha)$ corresponds to $\alpha$-th particle.
}
\label{fig:WCEPs}
\end{figure}

Finally, we find the homomorphisms $\phi: \mathcal{G} \rightarrow \mathbb{Z}_{p^{n_1}}$. 
For the case of a qubit and a 4-level system, we need the
homomorphisms $\phi: \mathbb{Z}_4 \oplus \mathbb{Z}_4 \oplus \mathbb{Z}_2 \oplus \mathbb{Z}_2 \rightarrow \mathbb{Z}_4$.
As before, we need to follow the procedure mentioned in Appendix \ref{app:c}. 
In this case, $n=2$ and therefore $\nu = 2$.
The homomorphisms $\phi$ are characterized by 
the values in the basis $\phi_{\alpha} = \phi(\vec{e}_{\alpha})$
which have to follow the conditions of eq.(\ref{eq:c1b}), that lead to:
\begin{itemize}
\item $\phi_1$: Since $\alpha= 1< 2= \nu$, we are in the second case of eq.(\ref{eq:c1b}), thus $\phi_1$ is a number modulo $p^n = 4$.
\item $\phi_2$: Since $\alpha= 2 =2= \nu$, we are in the first case of \eref{eq:c1b}, thus $\phi_2 = p^{n-M_2} s_2= s_2$ with $s_2$ a number modulo $p^{M_2} = 4$.
\item $\phi_3:$ Since $\alpha = 3 > 2$, $\phi_3 = p^{n-M_3} s_3 = 2s_3$ with $s_3$ a number modulo $p^{M_3} = 2$, so that $\phi_3 = 0,2$.
\item $\phi_4:$ Equivalently to $\phi_3$ we find that $\phi_4 =0,2$.
\end{itemize}
Therefore, the homomorphisms for a qubit and a 4-level system
are given by the $4$ numbers $\phi_1,\phi_2,\phi_3,\phi_4$ with $\phi_1,\phi_2 \in\{0,1,2,3\}$ and $\phi_3,\phi_4 \in \{0,2\}$ 
for a total of $64$ possibilities.
\subsection{Most General Case}
In the most general case
we have $N$ particles, each with arbitrary dimension $d_i$
and so the group under consideration is $\mathcal{G} = \bigoplus_{i=1}^N \mathbb{Z}_{d_i} \oplus \mathbb{Z}_{d_i}$.

Then, in this direct sum, we can first separate each $\mathbb{Z}_{d_i}$ as a sum of cyclic 
groups of prime power orders,
such as it was done in section 
\ref{sec: Single particle with arbitrary dimension} of this appendix.
Then, having written $\mathcal{G}$ as a direct sum
of cyclic groups with prime power order, we collect together
the cyclic groups of order that is a power of $2$, 
then cyclic groups of order power of $3$, $5$, $7$, and so on for each prime. 
 
After this, we can find the subgroups and homomorphisms of each of these
collections as it was done in 
section \ref{sec:N particles}. 
Finally, the subgroups of $\mathcal{G}$ can be found as cartesian products of subgroups 
of different collections.

\section{Number of subgroups per type $\overline{L}$}
\label{app:NumberOfSubgroups}

An expression for the number of different subgroups of type $\lpart$ is 
already known in the literature.
To introduce this expression we first need to consider the Ferrers graph of 
$\lpart$, that is, $L$ squares of which the first $L_1$ are in the first row, 
the next $L_2$ in the second, and so on. 
Then, the conjugate partition $\lpart^\prime$ is defined
as the Ferrers graph of $\lpart$ obtained by inverting rows and columns. 
Similarly, the partition $\mpart^\prime$ is defined as the conjugate partition
of $\lpart$. The number of subgroups $\mcH$ of type $\lpart$ of $\mcG_p$ is
given by
\begin{equation}
	\prod_{\alpha\geq 1}p^{M_{\alpha+1}^\prime(L_\alpha^\prime-M_\alpha^\prime)}
	\left[\genfrac{}{}{0pt}{0}{L_\alpha^\prime-M_{\alpha+1}^\prime}{M_\alpha^\prime-M_{\alpha+1}^\prime}
	\right]_p,
	\label{eq:border10}
\end{equation}
where the symbol
\begin{equation}
	\left[\genfrac{}{}{0pt}{0}{n}{m}
	\right]_p=\prod_{s=1}^m\frac{p^{n-s+1}-1}{p^{m-s+1}-1}
	\label{eq:border11}
\end{equation}
denotes the number of vector subspaces of dimension $m$ in a vector space of dimension
$n$ over the field $\mathbb{Z}_p$. The proof is rather intricate, and we refer the
reader to the relevant literature, such as \cite{Birkhoff1935, Butler1987}. 
However, the key fact is that the number of subgroups obtained by our algorithm can be 
compared with \eqref{eq:border10} to check that all subgroups of a given 
partition $\lpart$ have been found.
\bibliography{Bibliography}

\begin{thebibliography}{34}%
\makeatletter
\providecommand \@ifxundefined [1]{%
 \@ifx{#1\undefined}
}%
\providecommand \@ifnum [1]{%
 \ifnum #1\expandafter \@firstoftwo
 \else \expandafter \@secondoftwo
 \fi
}%
\providecommand \@ifx [1]{%
 \ifx #1\expandafter \@firstoftwo
 \else \expandafter \@secondoftwo
 \fi
}%
\providecommand \natexlab [1]{#1}%
\providecommand \enquote  [1]{``#1''}%
\providecommand \bibnamefont  [1]{#1}%
\providecommand \bibfnamefont [1]{#1}%
\providecommand \citenamefont [1]{#1}%
\providecommand \href@noop [0]{\@secondoftwo}%
\providecommand \href [0]{\begingroup \@sanitize@url \@href}%
\providecommand \@href[1]{\@@startlink{#1}\@@href}%
\providecommand \@@href[1]{\endgroup#1\@@endlink}%
\providecommand \@sanitize@url [0]{\catcode `\\12\catcode `\$12\catcode
  `\&12\catcode `\#12\catcode `\^12\catcode `\_12\catcode `\%12\relax}%
\providecommand \@@startlink[1]{}%
\providecommand \@@endlink[0]{}%
\providecommand \url  [0]{\begingroup\@sanitize@url \@url }%
\providecommand \@url [1]{\endgroup\@href {#1}{\urlprefix }}%
\providecommand \urlprefix  [0]{URL }%
\providecommand \Eprint [0]{\href }%
\providecommand \doibase [0]{https://doi.org/}%
\providecommand \selectlanguage [0]{\@gobble}%
\providecommand \bibinfo  [0]{\@secondoftwo}%
\providecommand \bibfield  [0]{\@secondoftwo}%
\providecommand \translation [1]{[#1]}%
\providecommand \BibitemOpen [0]{}%
\providecommand \bibitemStop [0]{}%
\providecommand \bibitemNoStop [0]{.\EOS\space}%
\providecommand \EOS [0]{\spacefactor3000\relax}%
\providecommand \BibitemShut  [1]{\csname bibitem#1\endcsname}%
\let\auto@bib@innerbib\@empty
\bibitem [{\citenamefont {Breuer}\ and\ \citenamefont
  {Petruccione}(2007)}]{breuer2007theory}%
  \BibitemOpen
  \bibfield  {author} {\bibinfo {author} {\bibfnamefont {H.-P.}\ \bibnamefont
  {Breuer}}\ and\ \bibinfo {author} {\bibfnamefont {F.}~\bibnamefont
  {Petruccione}},\ }\href
  {https://doi.org/10.1093/acprof:oso/9780199213900.001.0001} {\emph {\bibinfo
  {title} {{The Theory of Open Quantum Systems}}}}\ (\bibinfo  {publisher}
  {Oxford University Press},\ \bibinfo {year} {2007})\BibitemShut {NoStop}%
\bibitem [{\citenamefont {Rivas}\ and\ \citenamefont
  {Huelga}(2012)}]{rivas2012open}%
  \BibitemOpen
  \bibfield  {author} {\bibinfo {author} {\bibfnamefont {A.}~\bibnamefont
  {Rivas}}\ and\ \bibinfo {author} {\bibfnamefont {S.~F.}\ \bibnamefont
  {Huelga}},\ }\href@noop {} {\emph {\bibinfo {title} {Open quantum
  systems}}},\ Vol.~\bibinfo {volume} {10}\ (\bibinfo  {publisher} {Springer},\
  \bibinfo {year} {2012})\BibitemShut {NoStop}%
\bibitem [{\citenamefont {Schlosshauer}(2004)}]{Schlosshauer2004}%
  \BibitemOpen
  \bibfield  {author} {\bibinfo {author} {\bibfnamefont {M.}~\bibnamefont
  {Schlosshauer}},\ }\href {https://doi.org/10.1103/RevModPhys.76.1267}
  {\bibinfo {title} {{Decoherence, the measurement problem, and interpretations
  of quantum mechanics}}} (\bibinfo {year} {2004}),\ \Eprint
  {https://arxiv.org/abs/0312059} {arXiv:0312059 [quant-ph]} \BibitemShut
  {NoStop}%
\bibitem [{\citenamefont {Zurek}(2003)}]{RevModPhys.75.715}%
  \BibitemOpen
  \bibfield  {author} {\bibinfo {author} {\bibfnamefont {W.~H.}\ \bibnamefont
  {Zurek}},\ }\href {https://doi.org/10.1103/RevModPhys.75.715} {\bibfield
  {journal} {\bibinfo  {journal} {Rev. Mod. Phys.}\ }\textbf {\bibinfo {volume}
  {75}},\ \bibinfo {pages} {715} (\bibinfo {year} {2003})}\BibitemShut
  {NoStop}%
\bibitem [{\citenamefont {Nielsen}\ and\ \citenamefont
  {Chuang}(2010)}]{nielsen_chuang_2010}%
  \BibitemOpen
  \bibfield  {author} {\bibinfo {author} {\bibfnamefont {M.~A.}\ \bibnamefont
  {Nielsen}}\ and\ \bibinfo {author} {\bibfnamefont {I.~L.}\ \bibnamefont
  {Chuang}},\ }\href {https://doi.org/10.1017/CBO9780511976667} {\emph
  {\bibinfo {title} {Quantum Computation and Quantum Information: 10th
  Anniversary Edition}}}\ (\bibinfo  {publisher} {Cambridge University Press},\
  \bibinfo {year} {2010})\BibitemShut {NoStop}%
\bibitem [{\citenamefont {Wilde}(2017)}]{wilde2017}%
  \BibitemOpen
  \bibfield  {author} {\bibinfo {author} {\bibfnamefont {M.~M.}\ \bibnamefont
  {Wilde}},\ }\href {https://doi.org/10.1017/9781316809976} {\emph {\bibinfo
  {title} {Quantum Information Theory}}},\ \bibinfo {edition} {2nd}\ ed.\
  (\bibinfo  {publisher} {Cambridge University Press},\ \bibinfo {year}
  {2017})\BibitemShut {NoStop}%
\bibitem [{\citenamefont {Wolf}\ \emph {et~al.}(2008)\citenamefont {Wolf},
  \citenamefont {Eisert}, \citenamefont {Cubitt},\ and\ \citenamefont
  {Cirac}}]{Wolf2008assesing}%
  \BibitemOpen
  \bibfield  {author} {\bibinfo {author} {\bibfnamefont {M.~M.}\ \bibnamefont
  {Wolf}}, \bibinfo {author} {\bibfnamefont {J.}~\bibnamefont {Eisert}},
  \bibinfo {author} {\bibfnamefont {T.~S.}\ \bibnamefont {Cubitt}},\ and\
  \bibinfo {author} {\bibfnamefont {J.~I.}\ \bibnamefont {Cirac}},\ }\href
  {https://doi.org/10.1103/PhysRevLett.101.150402} {\bibfield  {journal}
  {\bibinfo  {journal} {Phys. Rev. Lett.}\ }\textbf {\bibinfo {volume} {101}},\
  \bibinfo {pages} {150402} (\bibinfo {year} {2008})}\BibitemShut {NoStop}%
\bibitem [{\citenamefont {Davalos}\ \emph {et~al.}(2019)\citenamefont
  {Davalos}, \citenamefont {Ziman},\ and\ \citenamefont
  {Pineda}}]{Davalos2019}%
  \BibitemOpen
  \bibfield  {author} {\bibinfo {author} {\bibfnamefont {D.}~\bibnamefont
  {Davalos}}, \bibinfo {author} {\bibfnamefont {M.}~\bibnamefont {Ziman}},\
  and\ \bibinfo {author} {\bibfnamefont {C.}~\bibnamefont {Pineda}},\ }\href
  {https://doi.org/10.22331/q-2019-05-20-144} {\bibfield  {journal} {\bibinfo
  {journal} {{Quantum}}\ }\textbf {\bibinfo {volume} {3}},\ \bibinfo {pages}
  {144} (\bibinfo {year} {2019})}\BibitemShut {NoStop}%
\bibitem [{\citenamefont {Heinosaari}\ and\ \citenamefont
  {Ziman}(2011)}]{heinosaari2011mathematical}%
  \BibitemOpen
  \bibfield  {author} {\bibinfo {author} {\bibfnamefont {T.}~\bibnamefont
  {Heinosaari}}\ and\ \bibinfo {author} {\bibfnamefont {M.}~\bibnamefont
  {Ziman}},\ }\href@noop {} {\emph {\bibinfo {title} {The mathematical language
  of quantum theory: from uncertainty to entanglement}}}\ (\bibinfo
  {publisher} {Cambridge University Press},\ \bibinfo {year}
  {2011})\BibitemShut {NoStop}%
\bibitem [{\citenamefont {Rivas}\ \emph {et~al.}(2014)\citenamefont {Rivas},
  \citenamefont {Huelga},\ and\ \citenamefont {Plenio}}]{Rivas2014}%
  \BibitemOpen
  \bibfield  {author} {\bibinfo {author} {\bibfnamefont {{\'{A}}.}~\bibnamefont
  {Rivas}}, \bibinfo {author} {\bibfnamefont {S.~F.}\ \bibnamefont {Huelga}},\
  and\ \bibinfo {author} {\bibfnamefont {M.~B.}\ \bibnamefont {Plenio}},\
  }\href {https://doi.org/10.1088/0034-4885/77/9/094001} {\bibfield  {journal}
  {\bibinfo  {journal} {Reports on Progress in Physics}\ }\textbf {\bibinfo
  {volume} {77}},\ \bibinfo {pages} {094001} (\bibinfo {year}
  {2014})}\BibitemShut {NoStop}%
\bibitem [{\citenamefont {Breuer}\ \emph {et~al.}(2016)\citenamefont {Breuer},
  \citenamefont {Laine}, \citenamefont {Piilo},\ and\ \citenamefont
  {Vacchini}}]{Breuer2016}%
  \BibitemOpen
  \bibfield  {author} {\bibinfo {author} {\bibfnamefont {H.~P.}\ \bibnamefont
  {Breuer}}, \bibinfo {author} {\bibfnamefont {E.~M.}\ \bibnamefont {Laine}},
  \bibinfo {author} {\bibfnamefont {J.}~\bibnamefont {Piilo}},\ and\ \bibinfo
  {author} {\bibfnamefont {B.}~\bibnamefont {Vacchini}},\ }\href
  {https://doi.org/10.1103/RevModPhys.88.021002} {\bibfield  {journal}
  {\bibinfo  {journal} {Reviews of Modern Physics}\ }\textbf {\bibinfo {volume}
  {88}},\ \bibinfo {pages} {021002} (\bibinfo {year} {2016})}\BibitemShut
  {NoStop}%
\bibitem [{\citenamefont {Gyongyosi}\ \emph {et~al.}(2018)\citenamefont
  {Gyongyosi}, \citenamefont {Imre},\ and\ \citenamefont
  {Nguyen}}]{Gyongyosi2018}%
  \BibitemOpen
  \bibfield  {author} {\bibinfo {author} {\bibfnamefont {L.}~\bibnamefont
  {Gyongyosi}}, \bibinfo {author} {\bibfnamefont {S.}~\bibnamefont {Imre}},\
  and\ \bibinfo {author} {\bibfnamefont {H.~V.}\ \bibnamefont {Nguyen}},\
  }\href {https://doi.org/10.1109/COMST.2017.2786748} {\bibfield  {journal}
  {\bibinfo  {journal} {IEEE Communications Surveys \& Tutorials}\ }\textbf
  {\bibinfo {volume} {20}},\ \bibinfo {pages} {1149} (\bibinfo {year}
  {2018})}\BibitemShut {NoStop}%
\bibitem [{\citenamefont {de~Leon}\ \emph {et~al.}(2022)\citenamefont
  {de~Leon}, \citenamefont {Fonseca}, \citenamefont {Leyvraz}, \citenamefont
  {Davalos},\ and\ \citenamefont {Pineda}}]{DeLeon2022}%
  \BibitemOpen
  \bibfield  {author} {\bibinfo {author} {\bibfnamefont {J.~A.}\ \bibnamefont
  {de~Leon}}, \bibinfo {author} {\bibfnamefont {A.}~\bibnamefont {Fonseca}},
  \bibinfo {author} {\bibfnamefont {F.}~\bibnamefont {Leyvraz}}, \bibinfo
  {author} {\bibfnamefont {D.}~\bibnamefont {Davalos}},\ and\ \bibinfo {author}
  {\bibfnamefont {C.}~\bibnamefont {Pineda}},\ }\href
  {https://doi.org/10.1103/PhysRevA.106.042604} {\bibfield  {journal} {\bibinfo
   {journal} {Physical Review A}\ }\textbf {\bibinfo {volume} {106}},\ \bibinfo
  {pages} {42604} (\bibinfo {year} {2022})},\ \Eprint
  {https://arxiv.org/abs/2205.05808} {arXiv:2205.05808} \BibitemShut {NoStop}%
\bibitem [{\citenamefont {Bru{\ss}}\ and\ \citenamefont
  {Macchiavello}(2002)}]{Brus2002}%
  \BibitemOpen
  \bibfield  {author} {\bibinfo {author} {\bibfnamefont {D.}~\bibnamefont
  {Bru{\ss}}}\ and\ \bibinfo {author} {\bibfnamefont {C.}~\bibnamefont
  {Macchiavello}},\ }\href {https://doi.org/10.1103/PhysRevLett.88.127901}
  {\bibfield  {journal} {\bibinfo  {journal} {Physical Review Letters}\
  }\textbf {\bibinfo {volume} {88}},\ \bibinfo {pages} {127901} (\bibinfo
  {year} {2002})}\BibitemShut {NoStop}%
\bibitem [{\citenamefont {Cerf}\ \emph {et~al.}(2002)\citenamefont {Cerf},
  \citenamefont {Bourennane}, \citenamefont {Karlsson},\ and\ \citenamefont
  {Gisin}}]{Cerf2002}%
  \BibitemOpen
  \bibfield  {author} {\bibinfo {author} {\bibfnamefont {N.~J.}\ \bibnamefont
  {Cerf}}, \bibinfo {author} {\bibfnamefont {M.}~\bibnamefont {Bourennane}},
  \bibinfo {author} {\bibfnamefont {A.}~\bibnamefont {Karlsson}},\ and\
  \bibinfo {author} {\bibfnamefont {N.}~\bibnamefont {Gisin}},\ }\href
  {https://doi.org/10.1103/PhysRevLett.88.127902} {\bibfield  {journal}
  {\bibinfo  {journal} {Physical Review Letters}\ }\textbf {\bibinfo {volume}
  {88}},\ \bibinfo {pages} {127902} (\bibinfo {year} {2002})}\BibitemShut
  {NoStop}%
\bibitem [{\citenamefont {Ralph}\ \emph {et~al.}(2007)\citenamefont {Ralph},
  \citenamefont {Resch},\ and\ \citenamefont {Gilchrist}}]{Ralph2007}%
  \BibitemOpen
  \bibfield  {author} {\bibinfo {author} {\bibfnamefont {T.~C.}\ \bibnamefont
  {Ralph}}, \bibinfo {author} {\bibfnamefont {K.~J.}\ \bibnamefont {Resch}},\
  and\ \bibinfo {author} {\bibfnamefont {A.}~\bibnamefont {Gilchrist}},\ }\href
  {https://doi.org/10.1103/PhysRevA.75.022313} {\bibfield  {journal} {\bibinfo
  {journal} {Physical Review A - Atomic, Molecular, and Optical Physics}\
  }\textbf {\bibinfo {volume} {75}},\ \bibinfo {pages} {022313} (\bibinfo
  {year} {2007})},\ \Eprint {https://arxiv.org/abs/0806.0654} {arXiv:0806.0654}
  \BibitemShut {NoStop}%
\bibitem [{\citenamefont {Campbell}(2014)}]{Campbell2014}%
  \BibitemOpen
  \bibfield  {author} {\bibinfo {author} {\bibfnamefont {E.~T.}\ \bibnamefont
  {Campbell}},\ }\href {https://doi.org/10.1103/PhysRevLett.113.230501}
  {\bibfield  {journal} {\bibinfo  {journal} {Physical Review Letters}\
  }\textbf {\bibinfo {volume} {113}},\ \bibinfo {pages} {230501} (\bibinfo
  {year} {2014})}\BibitemShut {NoStop}%
\bibitem [{\citenamefont {Wang}\ \emph {et~al.}(2020)\citenamefont {Wang},
  \citenamefont {Hu}, \citenamefont {Sanders},\ and\ \citenamefont
  {Kais}}]{Wang2020}%
  \BibitemOpen
  \bibfield  {author} {\bibinfo {author} {\bibfnamefont {Y.}~\bibnamefont
  {Wang}}, \bibinfo {author} {\bibfnamefont {Z.}~\bibnamefont {Hu}}, \bibinfo
  {author} {\bibfnamefont {B.~C.}\ \bibnamefont {Sanders}},\ and\ \bibinfo
  {author} {\bibfnamefont {S.}~\bibnamefont {Kais}},\ }\href
  {https://doi.org/10.3389/fphy.2020.589504} {\bibfield  {journal} {\bibinfo
  {journal} {Frontiers in Physics}\ }\textbf {\bibinfo {volume} {8}},\ \bibinfo
  {pages} {1} (\bibinfo {year} {2020})},\ \Eprint
  {https://arxiv.org/abs/2008.00959} {arXiv:2008.00959} \BibitemShut {NoStop}%
\bibitem [{\citenamefont {V{\'{e}}rtesi}\ \emph {et~al.}(2010)\citenamefont
  {V{\'{e}}rtesi}, \citenamefont {Pironio},\ and\ \citenamefont
  {Brunner}}]{Vertesi2010}%
  \BibitemOpen
  \bibfield  {author} {\bibinfo {author} {\bibfnamefont {T.}~\bibnamefont
  {V{\'{e}}rtesi}}, \bibinfo {author} {\bibfnamefont {S.}~\bibnamefont
  {Pironio}},\ and\ \bibinfo {author} {\bibfnamefont {N.}~\bibnamefont
  {Brunner}},\ }\href {https://doi.org/10.1103/PhysRevLett.104.060401}
  {\bibfield  {journal} {\bibinfo  {journal} {Physical Review Letters}\
  }\textbf {\bibinfo {volume} {104}},\ \bibinfo {pages} {060401} (\bibinfo
  {year} {2010})},\ \Eprint {https://arxiv.org/abs/0909.3171} {arXiv:0909.3171}
  \BibitemShut {NoStop}%
\bibitem [{\citenamefont {Skrzypczyk}\ and\ \citenamefont
  {Cavalcanti}(2018)}]{Skrzypczyk2018}%
  \BibitemOpen
  \bibfield  {author} {\bibinfo {author} {\bibfnamefont {P.}~\bibnamefont
  {Skrzypczyk}}\ and\ \bibinfo {author} {\bibfnamefont {D.}~\bibnamefont
  {Cavalcanti}},\ }\href {https://doi.org/10.1103/PhysRevLett.120.260401}
  {\bibfield  {journal} {\bibinfo  {journal} {Physical Review Letters}\
  }\textbf {\bibinfo {volume} {120}},\ \bibinfo {pages} {260401} (\bibinfo
  {year} {2018})},\ \Eprint {https://arxiv.org/abs/1803.05199}
  {arXiv:1803.05199} \BibitemShut {NoStop}%
\bibitem [{\citenamefont {Weyl}(1927)}]{weyl1927quantenmechanik}%
  \BibitemOpen
  \bibfield  {author} {\bibinfo {author} {\bibfnamefont {H.}~\bibnamefont
  {Weyl}},\ }\href {https://doi.org/10.1007/BF02055756} {\bibfield  {journal}
  {\bibinfo  {journal} {Zeitschrift f{\"u}r Physik}\ }\textbf {\bibinfo
  {volume} {46}},\ \bibinfo {pages} {1} (\bibinfo {year} {1927})}\BibitemShut
  {NoStop}%
\bibitem [{\citenamefont {Bertlmann}\ and\ \citenamefont
  {Krammer}(2008)}]{Bertlmann2008}%
  \BibitemOpen
  \bibfield  {author} {\bibinfo {author} {\bibfnamefont {R.~A.}\ \bibnamefont
  {Bertlmann}}\ and\ \bibinfo {author} {\bibfnamefont {P.}~\bibnamefont
  {Krammer}},\ }\href {https://doi.org/10.1088/1751-8113/41/23/235303}
  {\bibfield  {journal} {\bibinfo  {journal} {Journal of Physics A:
  Mathematical and Theoretical}\ }\textbf {\bibinfo {volume} {41}},\ \bibinfo
  {pages} {235303} (\bibinfo {year} {2008})},\ \Eprint
  {https://arxiv.org/abs/0806.1174} {arXiv:0806.1174} \BibitemShut {NoStop}%
\bibitem [{\citenamefont {Nathanson}\ and\ \citenamefont
  {Ruskai}(2007)}]{ruskai}%
  \BibitemOpen
  \bibfield  {author} {\bibinfo {author} {\bibfnamefont {M.}~\bibnamefont
  {Nathanson}}\ and\ \bibinfo {author} {\bibfnamefont {M.~B.}\ \bibnamefont
  {Ruskai}},\ }\href {https://doi.org/10.1088/1751-8113/40/28/S22} {\bibfield
  {journal} {\bibinfo  {journal} {Journal of Physics A: Mathematical and
  Theoretical}\ }\textbf {\bibinfo {volume} {40}},\ \bibinfo {pages} {8171}
  (\bibinfo {year} {2007})}\BibitemShut {NoStop}%
\bibitem [{\citenamefont {Ohno}\ and\ \citenamefont {Petz}(2009)}]{Ohno2009}%
  \BibitemOpen
  \bibfield  {author} {\bibinfo {author} {\bibfnamefont {H.}~\bibnamefont
  {Ohno}}\ and\ \bibinfo {author} {\bibfnamefont {D.}~\bibnamefont {Petz}},\
  }\href {https://doi.org/10.1007/s10474-009-8171-5} {\bibfield  {journal}
  {\bibinfo  {journal} {Acta Mathematica Hungarica}\ }\textbf {\bibinfo
  {volume} {124}},\ \bibinfo {pages} {165} (\bibinfo {year}
  {2009})}\BibitemShut {NoStop}%
\bibitem [{\citenamefont {Chru{\'{s}}ci{\'{n}}ski}\ and\ \citenamefont
  {Siudzi{\'{n}}ska}(2016)}]{Chruscinski2016}%
  \BibitemOpen
  \bibfield  {author} {\bibinfo {author} {\bibfnamefont {D.}~\bibnamefont
  {Chru{\'{s}}ci{\'{n}}ski}}\ and\ \bibinfo {author} {\bibfnamefont
  {K.}~\bibnamefont {Siudzi{\'{n}}ska}},\ }\href
  {https://doi.org/10.1103/PhysRevA.94.022118} {\bibfield  {journal} {\bibinfo
  {journal} {Physical Review A}\ }\textbf {\bibinfo {volume} {94}},\ \bibinfo
  {pages} {022118} (\bibinfo {year} {2016})},\ \Eprint
  {https://arxiv.org/abs/1606.02616} {arXiv:1606.02616} \BibitemShut {NoStop}%
\bibitem [{\citenamefont {Chru{\'{s}}ci{\'{n}}ski}\ and\ \citenamefont
  {Wudarski}(2013)}]{Chruscinski2013}%
  \BibitemOpen
  \bibfield  {author} {\bibinfo {author} {\bibfnamefont {D.}~\bibnamefont
  {Chru{\'{s}}ci{\'{n}}ski}}\ and\ \bibinfo {author} {\bibfnamefont {F.~A.}\
  \bibnamefont {Wudarski}},\ }\href
  {https://doi.org/10.1016/j.physleta.2013.04.020} {\bibfield  {journal}
  {\bibinfo  {journal} {Physics Letters, Section A: General, Atomic and Solid
  State Physics}\ }\textbf {\bibinfo {volume} {377}},\ \bibinfo {pages} {1425}
  (\bibinfo {year} {2013})},\ \Eprint {https://arxiv.org/abs/1212.2029}
  {arXiv:1212.2029} \BibitemShut {NoStop}%
\bibitem [{\citenamefont {Chru{\'{s}}ci{\'{n}}ski}\ and\ \citenamefont
  {Wudarski}(2015)}]{Chruscinski2015}%
  \BibitemOpen
  \bibfield  {author} {\bibinfo {author} {\bibfnamefont {D.}~\bibnamefont
  {Chru{\'{s}}ci{\'{n}}ski}}\ and\ \bibinfo {author} {\bibfnamefont {F.~A.}\
  \bibnamefont {Wudarski}},\ }\href
  {https://doi.org/10.1103/PhysRevA.91.012104} {\bibfield  {journal} {\bibinfo
  {journal} {Physical Review A - Atomic, Molecular, and Optical Physics}\
  }\textbf {\bibinfo {volume} {91}},\ \bibinfo {pages} {012104} (\bibinfo
  {year} {2015})},\ \Eprint {https://arxiv.org/abs/1408.1792} {arXiv:1408.1792}
  \BibitemShut {NoStop}%
\bibitem [{\citenamefont {Bengtsson}\ and\ \citenamefont
  {Życzkowski}(2017)}]{bengtsson_zyczkowski_2017}%
  \BibitemOpen
  \bibfield  {author} {\bibinfo {author} {\bibfnamefont {I.}~\bibnamefont
  {Bengtsson}}\ and\ \bibinfo {author} {\bibfnamefont {K.}~\bibnamefont
  {Życzkowski}},\ }\href {https://doi.org/10.1017/9781139207010} {\emph
  {\bibinfo {title} {Geometry of Quantum States}}}\ (\bibinfo  {publisher}
  {Cambridge University Press},\ \bibinfo {year} {2017})\BibitemShut {NoStop}%
\bibitem [{\citenamefont {Dahl}(2010)}]{dahl2010introduction}%
  \BibitemOpen
  \bibfield  {author} {\bibinfo {author} {\bibfnamefont {G.}~\bibnamefont
  {Dahl}},\ }\href@noop {} {\bibinfo {title} {{An introduction to convexity}}}
  (\bibinfo {year} {2010})\BibitemShut {NoStop}%
\bibitem [{\citenamefont {Burnside}(1911)}]{Burnside1911}%
  \BibitemOpen
  \bibfield  {author} {\bibinfo {author} {\bibfnamefont {W.}~\bibnamefont
  {Burnside}},\ }\href@noop {} {\emph {\bibinfo {title} {{Theory of groups of
  finite order}}}}\ (\bibinfo  {publisher} {The University Press},\ \bibinfo
  {year} {1911})\BibitemShut {NoStop}%
\bibitem [{\citenamefont {Hillar}\ and\ \citenamefont
  {Rhea}(2007)}]{Hillar2007}%
  \BibitemOpen
  \bibfield  {author} {\bibinfo {author} {\bibfnamefont {C.}~\bibnamefont
  {Hillar}}\ and\ \bibinfo {author} {\bibfnamefont {D.}~\bibnamefont {Rhea}},\
  }\href@noop {} {\bibfield  {journal} {\bibinfo  {journal} {The American
  Mathematical Monthly}\ }\textbf {\bibinfo {volume} {114}},\ \bibinfo {pages}
  {917} (\bibinfo {year} {2007})}\BibitemShut {NoStop}%
\bibitem [{\citenamefont {Birkhoff}(1935)}]{Birkhoff1935}%
  \BibitemOpen
  \bibfield  {author} {\bibinfo {author} {\bibfnamefont {G.}~\bibnamefont
  {Birkhoff}},\ }\href@noop {} {\bibfield  {journal} {\bibinfo  {journal}
  {Proceedings of the London Mathematical Society}\ }\textbf {\bibinfo {volume}
  {2}},\ \bibinfo {pages} {385} (\bibinfo {year} {1935})}\BibitemShut {NoStop}%
\bibitem [{\citenamefont {Siewert}(2022)}]{siewert2022orthogonal}%
  \BibitemOpen
  \bibfield  {author} {\bibinfo {author} {\bibfnamefont {J.}~\bibnamefont
  {Siewert}},\ }\href {https://doi.org/10.1088/2399-6528/AC6F43} {\bibfield
  {journal} {\bibinfo  {journal} {Journal of Physics Communications}\ }\textbf
  {\bibinfo {volume} {6}},\ \bibinfo {pages} {055014} (\bibinfo {year}
  {2022})}\BibitemShut {NoStop}%
\bibitem [{\citenamefont {Butler}(1987)}]{Butler1987}%
  \BibitemOpen
  \bibfield  {author} {\bibinfo {author} {\bibfnamefont {L.}~\bibnamefont
  {Butler}},\ }\href@noop {} {\bibfield  {journal} {\bibinfo  {journal}
  {Proceedings of the American Mathematical Society}\ }\textbf {\bibinfo
  {volume} {101}},\ \bibinfo {pages} {771} (\bibinfo {year}
  {1987})}\BibitemShut {NoStop}%
\end{thebibliography}%
\end{document}